\documentclass[conference]{IEEEtran}
\pdfoutput=1 
%\makeatletter
%\let\IEEEproof\proof
%\let\IEEEendproof\endproof
%\let\proof\@undefined
%\let\endproof\@undefined
%\makeatother
%\usepackage{appendix}
\usepackage{cite, graphicx,url,setspace,color}
\usepackage{amsmath,amssymb,amsthm}
\usepackage{algorithm}
\usepackage{algorithmic}

\usepackage{amssymb}
\usepackage{bbm}

\usepackage{enumitem}
\usepackage{graphicx} 
\usepackage{caption}
\usepackage{subcaption}
\usepackage{fancybox}
\usepackage{amssymb}
\usepackage{bbm}
%\usepackage{subcaption}
%\interdisplaylinepenalty=2500
\usepackage{bm}
\usepackage{enumitem}
\usepackage{CJK}

\newtheorem{assumption} {Assumption}
\newtheorem{proposition} {Proposition}
\newtheorem{remark}{Remark}

\newtheorem{corollary}{Corollary}
\newtheorem{lemma}{Lemma}
%\pdfminorversion=4
\usepackage{epstopdf}
\newcommand {\ga} {\mathcal{G}} 
\newcommand {\hF} {\hat{F}}
\begin{document}

\title{ Particle Gaussian Mixture (PGM) Filters}

\author{
\IEEEauthorblockN{D. Raihan and S. Chakravorty}
\IEEEauthorblockA{Department of Aerospace Engineering \\ Texas
  A\&M University \\ College Station, TX} 
  }
\maketitle

\setlength{\parskip}{0ex}

\begin{abstract}
 Recursive estimation of nonlinear dynamical systems is an important problem that arises in several engineering applications. Consistent and accurate propagation of uncertainties is important to ensuring good estimation performance. It is well known that the posterior state estimates in nonlinear problems may assume non-Gaussian multimodal densities. In the past, Gaussian mixture filters and  particle filters were introduced to handle non-Gaussianity and nonlinearity. However, these methods have seen only limited success as most mixture filters attempt to fix the number of mixture modes during estimation process, and the particle filters suffer from the curse of dimensionality. In this paper, we propose a particle based Gaussian mixture  filtering approach for the general nonlinear estimation problem that is free of the particle depletion problem inherent to most particle filters. We employ an ensemble of randomly sampled states for the propagation of state probability density. A Gaussian mixture model of the propagated uncertainty is then recovered by clustering the ensemble. The posterior density is  obtained subsequently through a Kalman measurement update of the mixture modes. We prove the weak convergence of the PGM density to the true filter density assuming exponential forgetting of initial conditions by the true filter. The estimation performance of the proposed filtering approach is demonstrated through several test cases. 
\end{abstract}

\section{Introduction}

%%\textcolor{blue}{The introduction can mimic the developments in Sec III C with the usual bromides about Kalman filtering.}
\IEEEPARstart{R}apid advances in the fields of control and automation has made it necessary to be able to estimate the state of a numerous variety of dynamical systems. As a result, there is growing interest in the recursive and computationally efficient algorithms for estimating the state and associated uncertainty in higher dimensional nonlinear systems. A great deal of prior research is available on static and dynamic estimation of parameters and systems. The Kalman filter was proposed as the unbiased minimum variance estimator for linear dynamical systems perturbed by additive Gaussian noise~\cite{Kal,Bucy}. The extended Kalman filter (EKF) was introduced to incorporate nonlinear systems into the Kalman filtering framework\cite{smith}. However, the limitations of the Jacobian linearization assumptions and the accumulation of linearization errors often resulted in the divergence of EKF estimates. The Unscented Kalman Filter (UKF) and the broader class of sigma point Kalman filters provided a derivative free alternative to the EKF\cite{Jul,Julier,Wan}. It has been shown that the UKF algorithm is in fact  a linearization of the process and measurement functions using statistical regression using the sigma points\cite{Lefeb}. In addition to linearizing the system model, both EKF and UKF approximate the posterior pdf with a single Gaussian pdf. However, the state pdf in a general nonlinear filtering problem can be non-Gaussian and multimodal. A Gaussian mixture approximation of the state pdf was proposed to incorporate the multimodality of the problem in nonlinear settings\cite{Alspach,Sorenson}. These approaches however had a major shortcoming as the number of Gaussian components were kept constant through out the estimation process. Also the component weights were updated only during the measurement update. Approaches to adapting the weights of individual Gaussian modes by minimizing the propagation error committed in the GMM approximation have been proposed recently\cite{Terejanu}. A different approach to improving the accuracy of GMM filters is by splitting the Gaussian components during the propagation based on nonlinearity induced distortion\cite{Mars}. Both of these approaches require frequent optimizations, or entropy calculations, to be performed during the propagation, which significantly add to the overall computational requirement.

The particle filters (PF) are a class of sequential Monte Carlo methods that employ an ensemble of states known as particles to represent the state pdf\cite{Gor,Aru}. These states are sampled from the initial pdf and propagated forward in time based on the nonlinear system model. The measurement updates are performed by assigning weights to individual particles which may then be resampled. The PF does not enforce restrictive assumptions on the nature of dynamics or pdf. Quite often, the measurement updates in particle filters result in weight degeneration wherein a significant fraction of particles lose their importance weights. This problem, termed the ``particle depletion" is a major shortcoming of the particle filters as it requires the number of particles to be increased exponentially with the dimension of state space\cite{Beng,Hua}. Particle based approaches  such as the Ensemble Kalman filter (EnKF) and the Feedback particle filter (FPF) that forego the resampling based measurement update have been demonstrated to be more effective in higher dimensional filtering problems involving unimodal pdfs \cite{Geven,FPF}.\\

In this paper, we propose a particle Gaussian mixture filter (PGMF), addressing the general multimodal nonlinear filtering problem. The PGMF design is inspired by a previous work on a UKF-PF hybrid filter that was proposed for space object tracking\cite{UKPF}. The PGMF employs an ensemble of states for performing the uncertainty propagation. A functional form of the propagated pdf is then recovered as a Gaussian mixture model by clustering the states. The posterior pdf is obtained by performing a Kalman measurement update on the GMM. The PGMF is conceived to keep track of the nonlinear uncertainty propagation without performing any additional optimization and splitting operation during the propagation step. As the posterior pdf is obtained without employing the particle measurement update, the PGMF is not prone to the particle depletion problem and the associated curse of dimensionality. As the additional clustering step is performed only during the measurement update step, the PGMF is especially suitable for filtering in the sparse measurement scenario.

The remainder of this article is organized as follows.
  An introductory discussion on mixture model and clustering is given in section \ref{sec:prem}. The PGM filter algorithm, and an associated convergence result, are presented in section \ref{sec:pgm}. Details pertaining to the actual implementation of the proposed filter is given in section\ref{sec:implem}. The PGM filter is applied to three test cases and compared extensively with the PF, the UKF and the EnKF in section \ref{sec:numex}. 
\section{Preliminaries: Mixture Model Filtering}
\label{sec:prem}
Let the state of the dynamical system of interest be denoted by $x \in \Re^d$. We assume that the state of the system evolves according to a Markov chain whose transition density is specified by $p(x'/x)$, and assumed to be known. We also obtain measurements of the state at discrete times $n$ and the observation model is specified by the following:
\begin{align}
z_n = h(x_n) + v_n,
\end{align}
where $\{v_n\}$ is a discrete time Gaussian white noise process with zero mean and covariance $R_n$. It is very well known that the filtered density of the state of the Markov chain follows the following two steps. Let $\pi_{n-1}(x)$ denote the pdf of the state after the measurement $z_{n-1}$. Then, the prediction of the pdf before the measurement $z_n$ at time $n$ (the predicted prior pdf) is given by:
\begin{align}
\pi_n^-(x) = \int p(x/x')\pi_{n-1}(x')dx', \label{P}
\end{align}
which is the law of total probability. Further, after measurement $z_n$ is received, the pdf of the state is updated according to Bayes rule as (the posterior pdf):
\begin{align}
\pi_n(x) = \frac{p(z_n/x)\pi_n^-(x)}{\int p(z_n/x')\pi_n^-(x')dx'}, \label{U}
\end{align}
$p(z/x)$ is the measurement likelihood function and can be inferred from the measurement model above.
The prediction and the update steps above are the key steps to any recursive filtering algorithm and different filtering approaches are distinguished by how they perform the above two steps. Let us assume that a mixture representation has been chosen for the predicted and posterior pdfs. In particular, let:
\begin{align}
\pi_n^-(x) = \sum_{i=1}^{M^-(n)} \omega_i^-(n) p_{i, n}^-(x), \; \nonumber\\
\pi_n(x) = \sum_{i=1}^{M(n)} \omega_i(n) p_{i,n}(x),
\end{align}
where $p_i^-(.), p_i(.)$ are standard pdfs, and $\{\omega_i^-(n)\}, \{\omega_i(n)\}$ are positive sets of weights that both add up to unity.  The prediction equation for the mixture model then boils down to the following:
\begin{align}
\pi_n^-(x) = \sum_{i=1}^{M(n-1)} \underbrace{\omega_i(n-1)}_{\omega^-_i(n)} \underbrace{\int p(x/x')\pi_{i,n-1}(x')dx'}_{p_{i,n}(x)}.
\end{align}
Explicitly, the mixture prediction step can be split into the following discrete and continuous steps:
\begin{align}
\omega_i^-(n) = \omega_i(n-1), \label{P.D}, \\
p^-_{i,n} (x) = \int p(x/x') p_{i,n-1}(x')dx'. \label{P.C}
\end{align}
Given an observation $z_n$, the prior mixture $\pi_n^-(x)$ is transformed into the posterior mixture $\pi_n(x)$ as follows:
\begin{align}
\pi_n(x) =\frac{ p(z_n/x) \sum_{i=1}^{M^-(n)} \omega_i^-(n) \pi^-_{i,n}(x)}{\int p(z_n/x') \sum_{i=1}^{M^-(n)} \omega_i^-(n) \pi^-_{i,n}(x')dx'} \nonumber\\
= \frac{\sum_{i=1}^{M^-(n)} \omega_i^-(n) p(z_n/x)\pi^-_{i,n}(x)}{\sum_{i=1}^{M^-(n)} \omega_i^-(n) \int p(z_n/x')\pi^-_{i,n}(x')dx'}.
\end{align}
Define the likelihood that $z_n$ comes from the $i^{th}$ mixture component as:
\begin{align}
l_i(n) \equiv \int p(z_n/x')\pi_{i,n}^-(x')dx'.
\end{align}
Rearranging the above mixture expression using the definition of the component/ mode likelihood gives us:
\begin{align}
\pi_n(x) = \sum_{i=1}^{M^-(n)} \underbrace{\frac{w_i^-(n)l_i(n)}{\sum_j w_j^-(n)l_j(n)}}_{w_i(n)} \underbrace{\frac{p(z_n/ x) \pi_{i,n}^-(x)}{l_i(n)}}_{\pi_{i,n}(x)}.
\end{align}
The above expression clearly shows that the measurement update has a hybrid nature, a standard update of the individual modes of the mixture with the measurement $z_n$, and a discrete Bayesian update of the mode weights using the mode likelihoods $l_i(n)$. Note further that the mode likelihoods are the Bayes' normalization factors for the individual modes. Moreover, note that there are no approximations whatsoever in the above mixture update equations. Explicitly, we delineate the discrete and continuous updates of the mixture model below:
\begin{align}
\omega_i(n) = \frac{w_i^-(n)l_i(n)}{\sum_j w_j^-(n)l_j(n)}, \label{MU.D}, \\
p_{i,n}(x) = \frac{p(z_n/ x) p_{i,n}^-(x)}{l_i(n)}. \label{MU.C}
\end{align}
It behooves us to take a closer look at the hybrid prediction equations \ref{P.D} and \ref{P.C} as well as the hybrid update equations \ref{MU.D} and \ref{MU.C}. It has been shown that a Gaussian mixture model(GMM) can be used to approximate a general continuous non-Gaussian pdf to any degree of accuracy\cite{Alspach}. The GMMs also inherit many of the desirable properties of Gaussian densities which makes their analysis easier. In figure \ref{fig:gmmbasic}, a weighted Gaussian sum is used to represent a multimodal pdf. Let us assume that we have fixed the form of the mixture model to a Gaussian Mixture Model (GMM), i.e., the posterior pdf at time $n-1$ can be represented by the GMM:
\begin{align}
\pi_{i,n-1}(x) = \ga (x;\mu_i(n-1), P_i(n-1)),
\end{align}
where $\ga(x; \mu, P)$ represents the Gaussian pdf with mean $\mu$ and covariance $P$.
Consider first the prediction equations. Note that from the way it has been written, the number of mixture components at time $n-1$,  $M(n-1)$, is the same as the number of mixture components of the prediction at time $n$, $M^-(n)$. However, this assumes that the prediction of the $i^{th}$Gaussian component $\pi_{i, n-1}$ of the posterior pdf at time $n-1$ remains a single Gaussian at time $n$, $\pi_{i,n}^-$. However, this is, in general, not true. The number of mixture components necessary to approximate the state pdf may vary from one time step to the other. For example, consider the nonlinear dynamical system given by
\begin{align}
\begin{bmatrix}
\dot{x}_{1}\\ 
\dot{x}_{2}
\end{bmatrix}
=&\begin{bmatrix}-\frac{x_{1}}{2}\\\sin(\frac{x_{2}}{2})\end{bmatrix}
+\Gamma(t),\\
\pi_{0}(X)=&\ga(X,\begin{bmatrix}-12\\0 \end{bmatrix},\begin{bmatrix}0.2&0\\0&1\end{bmatrix}).
\end{align}

% as a single Gaussian can split into multiple Gaussian components. Hence, in general, every predicted component $\pi_{i,n}^-$ is composed of $n_i$ Gaussian sub-components, i.e., every predicted component of the GMM is itself a GMM and not a single Gaussian:
%\begin{align}
%\pi_{i,n}^-(x) = \sum_{j=1}^{n_j} \omega_{ij}^-(n)\pi_{ij,n}^-(x), \mbox{and therefore}\\
%\pi^-(x) = \sum_{i=1}^{M(n-1)} \omega_i(n-1) \sum_{j=1}^{n_i} \omega_{ij}^-(n)\pi_{ij,n}^-(x), \\
%= \sum_{l=1}^{M^-(n)} \omega_{l}^-(n) \pi_{l,n}^-(x). 
%\end{align}
%and thus, $M(n-1) \neq M^-(n) =  \sum_i n_i$,
In figure \ref{fig:multiprop}, the locations of 200 particles sampled from the initial pdf $\pi_{0}(x)$ as they evolve through the dynamics of the system are plotted. The particles are seen to separate into two distinct modes as time progresses.   
  Hence, in order to use mixture models for prediction, we have to find a way to tackle the problem of time varying number of GMM components. \\
\begin{figure}[h]     
\includegraphics[width=0.45\textwidth,height=0.2\textheight]{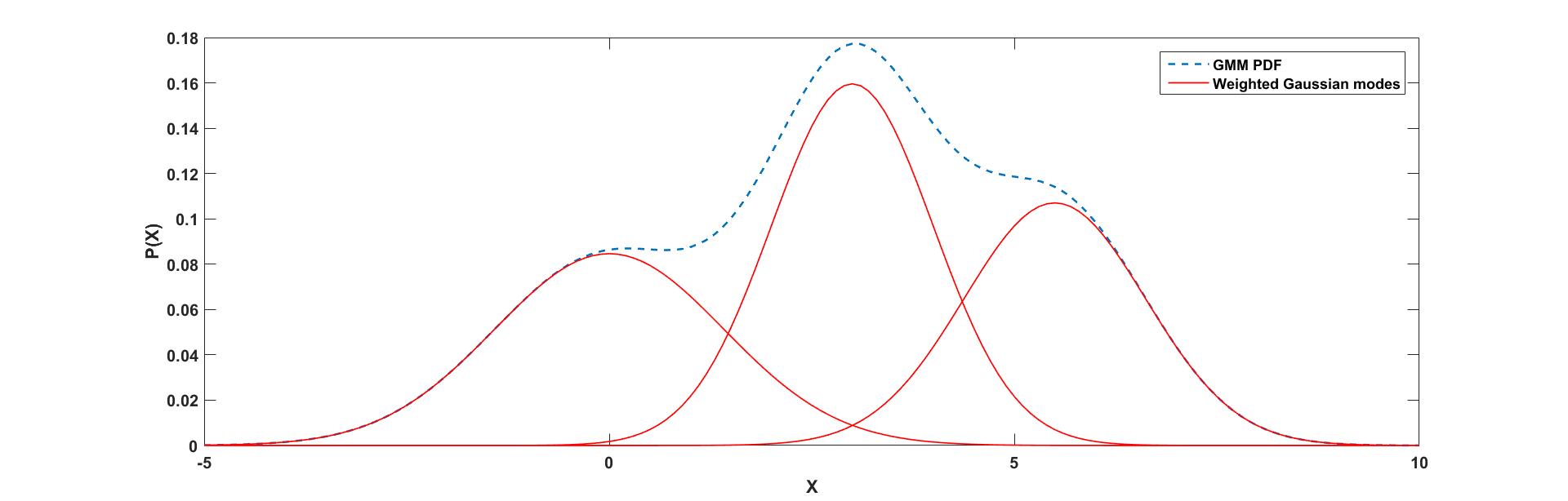} 
\caption{Gaussian Mixture representation of a multimodal PDF}
\label{fig:gmmbasic}
\end{figure}
Next, let us consider the measurement update equations \ref{MU.D} and \ref{MU.C}.
The discrete weight update Eq. \ref{MU.D} is quite clear, however, it behooves us to take a closer look at the continuous mode update Eq. \ref{MU.C}. Since the prior component is Gaussian, and the update Eq. \ref{MU.C} could be done simply using the Kalman/ least squares update, i.e.,:
\begin{align}
\mu_i(n) = \mu_i^-(n) + P_{i, zx}^{-1'}(n) P_{i,zz}^{-1}(n) (z_n - E_i[h(X)]), \label{KU1} \\
P_i(n) = P_i^-(n) - P_{i, zx}^{-1'}(n) P_{i,zz}^{-1}(n)P_{i, zx}^{-1}(n), \label{KU2}
\end{align} 
where $P_{i,zx}(n) = E_i[h(X)-E_{i}(h(X))(X-E_{i}(X))']$ ,$ P_{i,zz}(n) = E_i[(h(X)-E_{i}(h(X))(h(X)-E_{i}(h(X))']$, and $E_i[f(X)]$ represents an expectation of the function $f(X)$ with respect to the random variable $X$ where $X \sim\ga(x; \mu_i^-(n), P_i^-(n))$.  However, this update is not necessarily correct. Similar to the prediction case, in general, a single predicted Gaussian component can split into multiple modes after the Bayesian update \ref{MU.C}, i.e., the updated component itself is a GMM.
%In particular, if the intersection of the supports of the likelihood function $p(z_n/x)$ and the prior component $p_{i,n}^-$ is disconnected, the updated component is multi-modal, i.e., has multiple components, and thus, the unimodal Kalman update fails to capture this critical aspect.
An illustration of this is given in figure. \ref{fig:multiup}. In this case we have a prior ensemble generated from $\pi(x)=\ga(X,\begin{bmatrix}0\\0 \end{bmatrix},\begin{bmatrix}1&0\\0&2\end{bmatrix})$. Then, a noisy measurement $z=2$ is recorded where
\begin{align}
z=x_{1}^{2}+\tau,\\
\tau \sim \ga(x,0,2)\nonumber.
\end{align}
An ensemle for the posterior pdf $\pi(x|z)$ is obtained through resampling and is seen to split into two separate modes. Hence, just as in the prediction step, there is a need to deal with the time varying number of GMM components after an update.\\

\begin{figure}[h!]    
\begin{subfigure}[t]{0.45\textwidth}
\includegraphics[width=\linewidth,height=4cm]{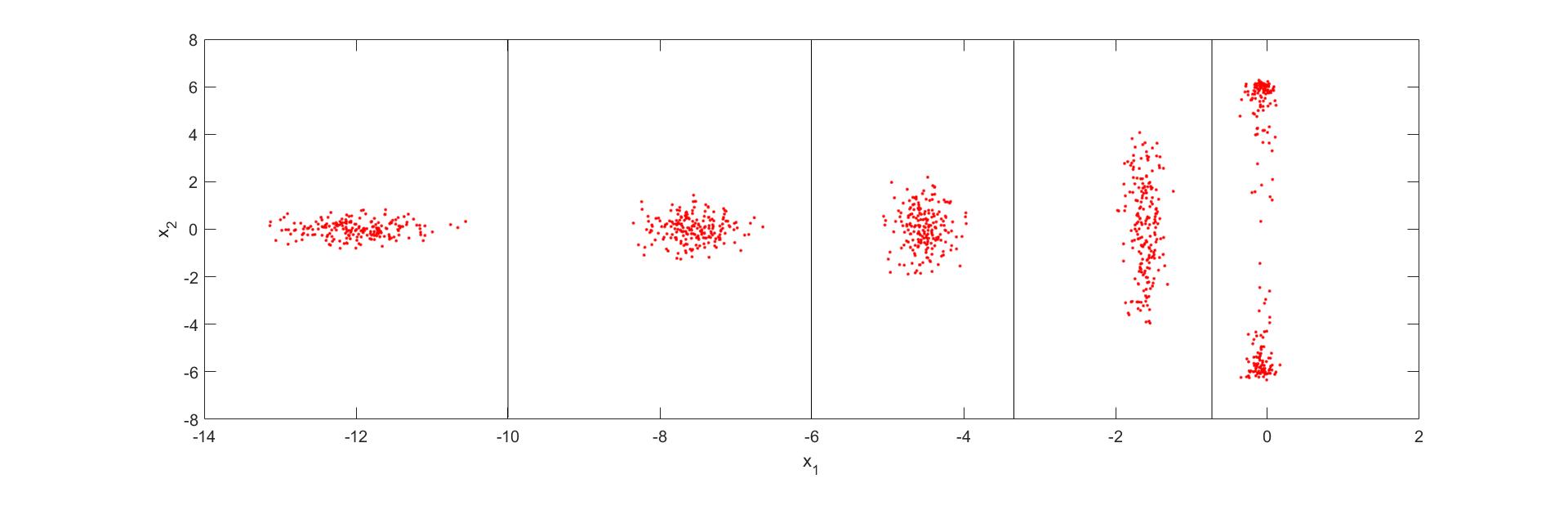}
\caption{}
\label{fig:multiprop}
\end{subfigure}
\hspace{\fill}
\begin{subfigure}[t]{0.45\textwidth}
\includegraphics[width=\linewidth,height=4cm,]{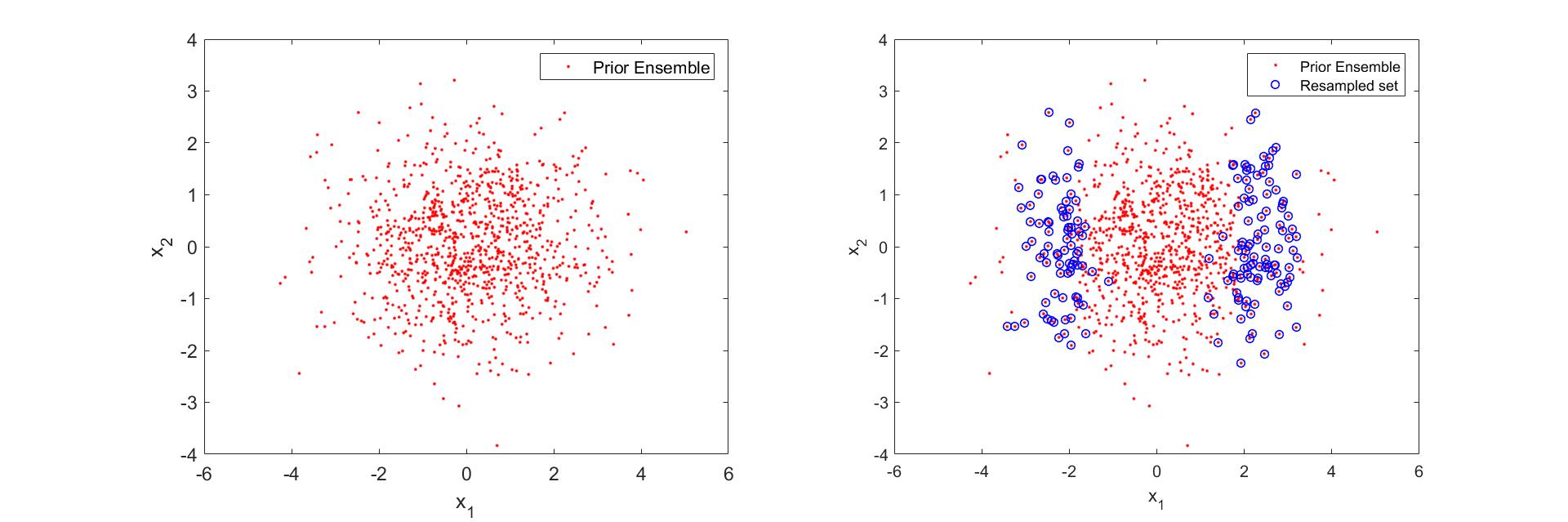}
\caption{}
\label{fig:multiup}
\end{subfigure}
\caption{ Formation of multimodality a) through dynamics b) through measurement update}
\end{figure}

For ease of treatment and clarity of exposition, we shall not consider the measurement update aspect of the GMM filtering problem in this paper, which will be treated in a companion paper. Hence, we make the following assumption for the remainder of the paper.
\begin{assumption} \label{UM_update}
We shall assume a Gaussian mixture representation for the predicted and posterior filtered densities. Further, we assume that given a predicted mixture component at time $n$, $\ga(x; \mu_i^-(n), P_i^-(n))$, the update Eq. \ref{MU.C} after an observation $z_n$ is approximated arbitrarily well by the Least Squares/ Kalman update Eq. \ref{KU1}-\ref{KU2}.
\end{assumption}
%This is a reasonable approximation as long as the intersection of the supports of the likelihood function and the predicted pdf is not disconnected, for instance, it will always hold for a linear measurement function.

\section{The Particle Gaussian Mixture (PGM) Filter}
\label{sec:pgm}
In this section, we first present the PGM filter. In the following subsection, we show the weak convergence of the PGM filter density to the true filter density under the condition of exponential forgetting of initial conditions by the true filter. Finally, we also compare and contrast the PGM algorithm with other mixture based nonlinear filters, in particular, the PF, the Gaussian Mixture Filter (GMF) and the EnKF.

\subsection{The PGM Algorithm}
The basic assumption underlying the PGM algorithm is that the predicted prior and posterior filter densities can be represented using a GMM. In particular, let:
\begin{align}
\pi_n^-(x) = \sum_{i=1}^{M^-(n)} \omega_i^-(n) \ga_i^-(x; \mu_i^-(n), P_i^-(n)),\\
\pi_n(x) = \sum_{i=1}^{M(n)} \omega_i(n) \ga_i(x; \mu_i(n), P_i(n)).
\end{align}
In general, $M^-(n)$  and $M(n)$ need not be the same, however, owing to Assumption \ref{UM_update}, they are assumed to be equal for the purposes of this paper. For instance, given a linear measurement function, this is true. The PGM filtering algorithm is composed of three basic steps that are described below.
\begin{enumerate}
\item \textbf{Sampling}:
 The PGM filter assumes the availability of the Markov transition kernel $p(x/x')$ using which it can draw samples of the next state $x$ given that the current state is $x'$. 
\begin{algorithm}
\caption{PGM Algorithm}
Given $\pi_0 (x) = \sum_{i=1}^{M(0)} \omega_i(0) \ga_i(x; \mu_i(0), P_i(0))$, transition density kernel $p(x'/x)$, n = 1. 
\begin{enumerate}
\item{Sample $N_{p}$ particles $X^{(i)}$ from from $\pi_{n-1}$ and the transition kernel $p(x'/x)$ as follows:}
\begin{enumerate}
\item{Sample $X^{(i)'}$ from $\pi_{n-1}(.)$.}
\item{Sample $X^{(i)}$ from $p(./X^{(i)'})$.}
\end{enumerate}
\item{Use a Clustering Algorithm $\mathcal{C}$ to cluster the set of particles $\{X^{(i)}\}$ into $M^-(n)$ Gaussian clusters with weights, mean and covariance given by $\{w_i^-(n), \mu_i^-(n), P_i^-(n)\}$.}
\item{Update the mixture weights and the mixture means and covariances to $\{\omega_i(n), \mu_i(n), P_i(n)\}$, given the observation $z_n$, utilizing the Kalman update Eqs. \ref{KU1}, \ref{KU2}.}
\item{n = n+1, go to Step 1.}
\end{enumerate}
\end{algorithm}
The first step in the PGM algorithm is the use of the transition kernel to generate a set of samples at the next time step (which is the same as in a Particle filter).
% For instance, suppose the clustering algorithm assigns the set of particles $\{X_{j_i}\}$ to the $j^{th}$ cluster, then the sample mean and covariance of the cluster is given by:
%\begin{align}
%\hat{\mu}_j = \frac{1}{N_j} \sum_{i=1}^{N_j} X_{j_i}, \\
%P_j = \frac{1}{N_j-1} \sum_{i=1}^{N_j} (X_{j_i} -\hat{\mu}_j)(X_{j_i} -\hat{\mu}_j)'.
%\end{align}
\item \textbf{Clustering}: Then, we use a clustering algorithm $\mathcal{C}$ to partition the set of points into $M^-(n)$ different clusters whose means and covariances can be evaluated using sample averaging.
Clustering is a field of Machine learning termed as Unsupervised Learning \cite{Duda,Jain1}. There are many different algorithms that can be used for clustering, for instance K-means clustering \cite{Macq}, EM clustering \cite{Demp} among others. In the basic clustering algorithms, the number of clusters have to be specified\cite{sugar,tibsh}  while the more advanced techniques also estimate the number of clusters \cite{figjain}. In the experimental results presented in this paper, we use the simple K-means clustering algorithm, which is computationally very inexpensive while still being able to give good results for well separated clusters.
The k-means clustering is a popular approach to partitioning wherein the data set is grouped into different clusters so that the sum of squares of within-group distances is minimized,i.e, the data set S is partitioned into $M$ clusters $G_{M}^{*}=\lbrace S_{1},\cdots,S_{M} \rbrace $ such that
\begin{equation}
G_{L}^{*}= \underset{G_{L}}{\operatorname{arg min}} \sum \limits_{{i=1}}^{M} \sum \limits_{x_{j} \in S_{i}}\|x_{i}-\mu_{i}\|^{2}.
\end{equation}
Here $G_{M}$ denotes any partition of the set S into $M$ clusters and $\mu_{i}$ represents the mean of the elements of the $i^{th}$ cluster in that partition. Once the vectors $x_{i}$ are assigned into different clusters, an $M$ mode GMM describing the set S may be derived as follows.
\begin{align}
n_{i}= &\sum \limits_{j=1}^N \mathbbm{1}(x_{j} \in S_{i} ),\\
w_{i}= & \frac{1}{N}n_{i},\\
\mu_{i}=&\frac{1}{n_{i}}\sum \limits_{x_{j} \in S_{i}} x_{j},\\
C_{i}=&\frac{1}{n_{i}-1}\sum \limits_{x_{j} \in S_{i}} ( x_{j}-\mu_{i})( x_{j}-\mu_{i})^{T}.
\end{align}
Here $\mathbbm{1}(.)$ represents the indicator function. In figure \ref{fig_sim}, a set of points are chosen from the two dimensional space and clustered using the k-means algorithm to obtain a GMM.

\begin{figure}[h]
\centering
\begin{subfigure}[t]{0.45\textwidth}
\includegraphics[width=\linewidth,height=4cm,]{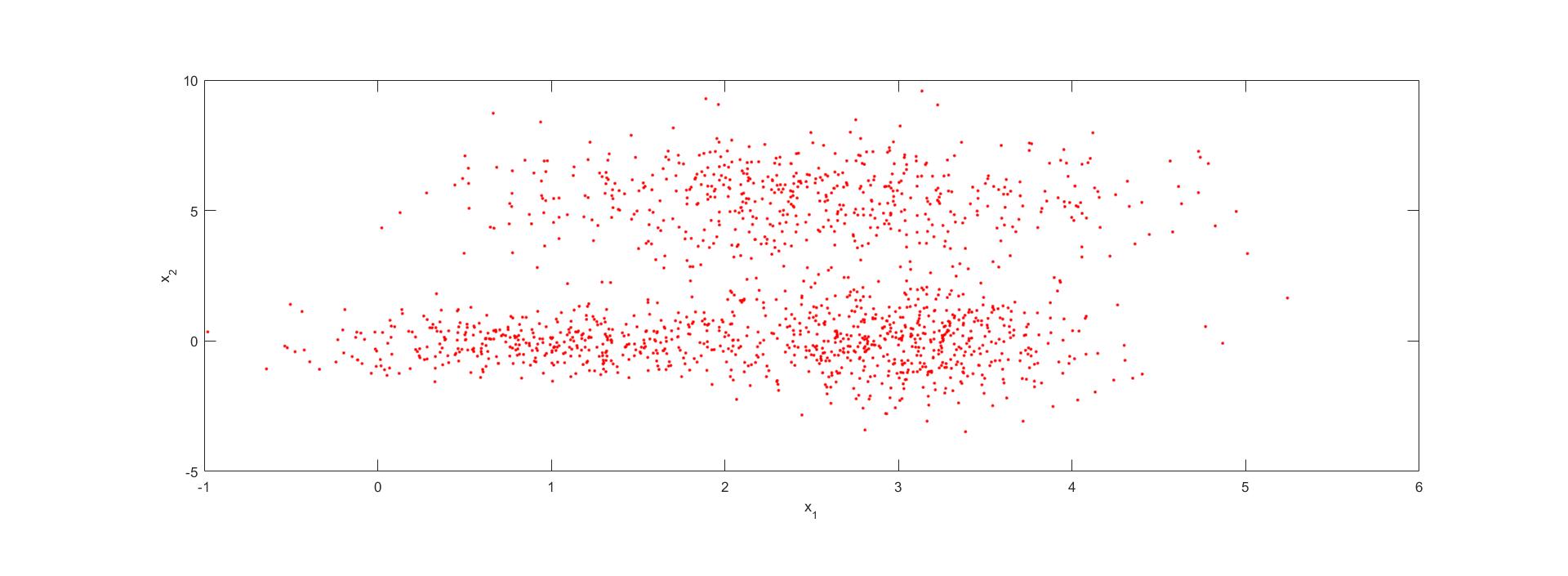}%
\label{fig_first_case}
\end{subfigure}
\begin{subfigure}[t]{0.45\textwidth}
\includegraphics[width=\linewidth,height=4cm,]{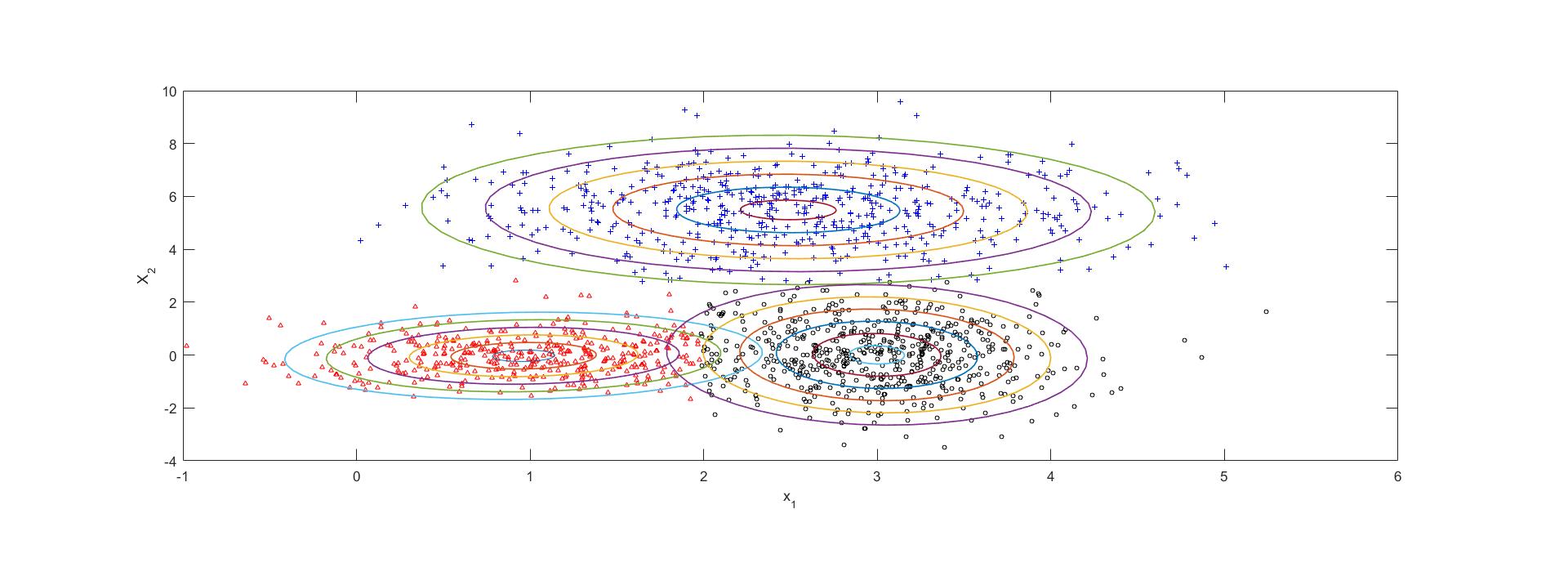}%
\label{fig_second_case}
\end{subfigure}
\caption{(a) Set of points in the 2-d Euclidean space (b) Contour plots of the GMM components obtained by clustering the points into three groups using k-means algorithm}
\label{fig_sim}
\end{figure}
\item \textbf{Measurement update}: Incorporate the measurement information by updating the means and covariances of all $M$ modes individually using a least squares/ Kalman measurement update. Also update the mixture weights using the mode likelihoods $l_{i}(n)$ as in  Eq. \ref{MU.D}.
\end{enumerate}
\subsection{Analysis of the PGM Filter}
In the following, we analyze the PGM filter. We show that under the assumption of a perfect clustering scheme $\mathcal{C}$, the PGM filter density converges weakly to the true filter density.\\
Let $F_{z_n}(\pi_{n-1}) = \pi_n$ denote the true filter density at time $n$ given that the filter density at time $n-1$ is $\pi_{n-1}$ and the observation at time $n$ is $z_n$. Further, let $\hF_{z_n}(\pi_{n-1})$ denote the filter density approximated by the PGM filter. We make the following exponential forgetting assumption on the true filter.
\begin{assumption} \label{EF}
We assume that there exists $C < \infty$ and $\rho<1$ such that:
\begin{align}
||F_{z_n}F_{z_{n-1}} ..F_{z_1}(\pi_0) - F_{z_n}F_{z_{n-1}} ..F_{z_1} (\pi_0')|| \leq C\rho^n ||\pi_0 -{\pi'}_0||,
\end{align}
for any measurement sequence $\{z_1,z_2,\cdots z_n\}$, any $\pi_0, \pi_0'$, and where $||.||$ denotes the $L_1$ norm.
\end{assumption}
Similarly let $\hF_{z_n}\hF_{z_{n-1}}\cdots\hF_{z_0}(\pi_0)$ denote the filtered density approximated by the PGM filter given the measurement sequence $\{z_1, z_2, \cdots z_n\}$ and the initial density $\pi_0$. 
\begin{assumption} \label{S_acc}
Let $Prob(||\hF_{z_n} (\hat{\pi}_{n-1}) - F_{z_n}(\hat{\pi}_{n-1})||> \epsilon) < \delta$, for all $n$. Further, we asume that $Prob( ||\hF_{z_n} (\hat{\pi}_{n-1}) - F_{z_n}(\hat{\pi}_{n-1})||> M) = 0$, for all $n$, for some $M<\infty$ (the error in a one step approximation of the filter density is almost surely uniformly bounded over all time).
\end{assumption}
\begin{lemma}\label{L1}
Let $||\hF_{z_n}(\hat{\pi}_{n-1}) - F_{z_n}(\hat{\pi}_{n-1})|| \leq \epsilon$, for all $n$. Under Assumption \ref{EF}, it follows that $||\hat{\pi}_n - \pi_n|| \leq \frac{C\epsilon}{1-\rho}$.
\end{lemma}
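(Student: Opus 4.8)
\textit{Proof proposal.} The plan is to write the global error $\|\hat{\pi}_n - \pi_n\|$ as a telescoping sum of one-step errors, each pushed forward by a tail of true-filter maps, and then to dominate each term via the exponential-forgetting bound of Assumption~\ref{EF}.

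First I would fix $n$ and introduce, for $k = 0,1,\dots,n$, the hybrid densities
\[
\mu_k := F_{z_n}F_{z_{n-1}}\cdots F_{z_{k+1}}(\hat{\pi}_k),
\]
i.e.\ propagate with the PGM filter up to time $k$ and with the \emph{true} filter thereafter. With the empty-composition convention, $\mu_n = \hat{\pi}_n$, and since the PGM recursion is initialized at $\pi_0$ (so $\hat{\pi}_0 = \pi_0$), we have $\mu_0 = F_{z_n}\cdots F_{z_1}(\pi_0) = \pi_n$. Hence
\[
\hat{\pi}_n - \pi_n \;=\; \mu_n - \mu_0 \;=\; \sum_{k=1}^{n}\bigl(\mu_k - \mu_{k-1}\bigr).
\]

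Next, for each $k$ I would note that $\mu_k$ and $\mu_{k-1}$ are the images under the \emph{same} composed true-filter map $F_{z_n}\cdots F_{z_{k+1}}$ of the two densities $\hat{\pi}_k = \hF_{z_k}(\hat{\pi}_{k-1})$ and $F_{z_k}(\hat{\pi}_{k-1})$, whose $L_1$ distance is at most $\epsilon$ by the hypothesis of the lemma. Applying Assumption~\ref{EF} to this composition of $n-k$ true-filter steps (the assumption is uniform over measurement sequences and over the pair of ``initial'' densities, so it transfers verbatim with the initial time shifted to $k$) gives
\[
\|\mu_k - \mu_{k-1}\| \;\le\; C\rho^{\,n-k}\,\bigl\|\hF_{z_k}(\hat{\pi}_{k-1}) - F_{z_k}(\hat{\pi}_{k-1})\bigr\| \;\le\; C\rho^{\,n-k}\,\epsilon,
\]
where the $k=n$ term (an empty composition) is bounded directly by $\epsilon \le C\epsilon$, taking $C \ge 1$ without loss of generality. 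Summing the geometric series,
\[
\|\hat{\pi}_n - \pi_n\| \;\le\; \sum_{k=1}^{n} C\rho^{\,n-k}\epsilon \;=\; C\epsilon\sum_{j=0}^{n-1}\rho^{\,j} \;\le\; \frac{C\epsilon}{1-\rho}.
\]

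The only delicate point — hence the main thing to get right rather than a genuine obstacle — is the index bookkeeping in invoking Assumption~\ref{EF}: it is stated for the length-$n$ composition starting at time $1$, and one must confirm it applies to the length-$(n-k)$ tail $F_{z_n}\cdots F_{z_{k+1}}$ acting on arbitrary densities at time $k$, which it does by the uniformity in the hypothesis. The rest is the telescoping identity and a geometric-series estimate.
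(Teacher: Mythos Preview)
Your proof is correct and is essentially the same as the paper's: both write $\hat{\pi}_n - \pi_n$ as a telescoping sum of terms of the form $F_{z_n}\cdots F_{z_{k+1}}\bigl(\hF_{z_k}(\hat{\pi}_{k-1})\bigr) - F_{z_n}\cdots F_{z_{k+1}}\bigl(F_{z_k}(\hat{\pi}_{k-1})\bigr)$ (the paper calls these $\Delta_k$), bound each by $C\rho^{n-k}\epsilon$ via Assumption~\ref{EF}, and sum the geometric series. Your explicit hybrid-density notation $\mu_k$ and the remark about the time-shift in applying Assumption~\ref{EF} are cosmetic clarifications of exactly the argument the paper gives.
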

\begin{proof}
We have:
\begin{align}
\hat{\pi}_n - \pi_n = \hF_{z_n}\hF_{z_{n-1}}..\hF_{z_1} (\pi_0) - F_{z_n}F_{z_{n-1}}...F_{z_1}(\pi_0), \nonumber\\
= \underbrace{[\hF_n\hF_{n-1}..\hF_1(\pi_0) - F_n\hF_{n-1}..\hF_1(\pi_0)]}_{\Delta_n}  \nonumber\\
+\underbrace{[F_n\hF_{n-1}..\hF_1(\pi_0) - F_nF_{n-1}\hF_{n-2}..\hF_1(\pi_0)]}_{\Delta_{n-1}} \nonumber\\
+\cdots +\underbrace{[F_nF_{n-1}....\hF_1(\pi_0) - F_n..F_1(\pi_0)]}_{\Delta_1}.
\end{align}
Note that the different terms on the RHS above are:
\begin{align}
\Delta_n = \hF_n(\hat{\pi}_{n-1}) - F_n(\hat{\pi}_{n-1}), \nonumber\\
\Delta_{n-1} = F_n(\hF_{n-1}(\hat{\pi}_{n-2}))- F_n(F_{n-1} (\hat{\pi}_{n-2})), ....\nonumber\\
\Delta_1 = F_n..F_2(\hF_1(\pi_0)) - F_n..F_2(F_1(\pi_0)).
\end{align}
Using Assumption \ref{EF} and the fact that $||\hF_{z_n}(\hat{\pi}_{n-1}) - F_{z_n}(\hat{\pi}_{n-1})|| \leq \epsilon$, for all $n$, it follows that:
\begin{align}
||\Delta_i|| \leq C\rho^{n-i}\epsilon,\nonumber
\end{align}
and thus, 
\begin{align}
||\hat{\pi}_n - \pi_n|| \leq \sum_{i=1}^n  C\rho^{n-i}\epsilon \leq \frac{C\epsilon}{1-\rho}.
\end{align}
The above result also holds for initial conditions in the infinite past, i.e., at $n =-\infty$. In the following, we assume that the initial condition was in the infinite past.
\end{proof}
\begin{lemma} \label{L2}
Let Assumptions \ref{EF} and \ref{S_acc} hold. Given any $\delta, \nu > 0$, there exists an $N < \infty$, such that:
\begin{align}
Prob(||\hat{\pi}_n - \pi_n|| > \frac{(1+\nu)C\epsilon}{1-\rho} )\leq N\delta,
\end{align}
where $ N = n-n'$, and $n'$ is such that $\sum_{i=-\infty}^{n'} \rho^{n-i} = f$, and $f = \frac{\nu C\epsilon}{M(1-\rho)}$.
\end{lemma}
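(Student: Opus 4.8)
The plan is to recycle the telescoping decomposition from the proof of Lemma~\ref{L1}, but, because the initial condition is now in the infinite past, to split the resulting infinite sum of propagated one‑step errors at the index $n'$ and treat the two pieces differently: the remote past $i\le n'$ is bounded \emph{deterministically} using the almost‑sure bound $M$ of Assumption~\ref{S_acc}, while the $N=n-n'$ most recent steps $i=n'+1,\dots,n$ are bounded \emph{in probability}, so that only a \emph{finite} union bound over $N$ events is needed.

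Concretely, I would first write $\hat\pi_n-\pi_n=\sum_{i=-\infty}^{n}\Delta_i$ with $\Delta_i = F_{z_n}\cdots F_{z_{i+1}}\bigl(\hF_{z_i}(\hat\pi_{i-1})\bigr)-F_{z_n}\cdots F_{z_{i+1}}\bigl(F_{z_i}(\hat\pi_{i-1})\bigr)$, exactly as in Lemma~\ref{L1}, and apply Assumption~\ref{EF} to the $n-i$ maps $F_{z_{i+1}},\dots,F_{z_n}$ to get $\|\Delta_i\|\le C\rho^{n-i}\,\|\hF_{z_i}(\hat\pi_{i-1})-F_{z_i}(\hat\pi_{i-1})\|$. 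For the tail I would use $\|\hF_{z_i}(\hat\pi_{i-1})-F_{z_i}(\hat\pi_{i-1})\|\le M$ almost surely (Assumption~\ref{S_acc}), so that $\bigl\|\sum_{i=-\infty}^{n'}\Delta_i\bigr\|\le M\sum_{i=-\infty}^{n'}C\rho^{n-i}=CM\,\frac{\rho^{n-n'}}{1-\rho}$; since $\rho<1$ this geometric mass can be driven below any prescribed level by taking $n'$ far enough back, and the threshold on $\sum_{i=-\infty}^{n'}\rho^{n-i}$ needed to make this tail at most $\tfrac{\nu C\epsilon}{1-\rho}$ is exactly what is encoded by the quantity $f$ in the statement. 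Because $\rho^{n-n'}/(1-\rho)=f$ solves to $n-n'=\ln\!\bigl(f(1-\rho)\bigr)/\ln\rho$, which is finite and independent of $n$, this already shows $N=n-n'<\infty$ uniformly in $n$ (up to rounding to an integer).

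For the head I would observe that, by Assumption~\ref{S_acc}, each of the $N$ events $\{\|\hF_{z_i}(\hat\pi_{i-1})-F_{z_i}(\hat\pi_{i-1})\|\le\epsilon\}$, $i=n'+1,\dots,n$, has probability at least $1-\delta$, and a union bound gives that their intersection has probability at least $1-N\delta$. No independence between steps is required: Assumption~\ref{S_acc} is a per‑step statement about the genuine random iterate $\hat\pi_{i-1}$ produced by the PGM recursion. On that intersection, $\bigl\|\sum_{i=n'+1}^{n}\Delta_i\bigr\|\le\sum_{i=n'+1}^{n}C\rho^{n-i}\epsilon\le\sum_{k\ge0}C\rho^{k}\epsilon=\tfrac{C\epsilon}{1-\rho}$, exactly the bound of Lemma~\ref{L1}. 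Adding the tail estimate, on that same event $\|\hat\pi_n-\pi_n\|\le\tfrac{C\epsilon}{1-\rho}+\tfrac{\nu C\epsilon}{1-\rho}=\tfrac{(1+\nu)C\epsilon}{1-\rho}$, and passing to complements gives $Prob\bigl(\|\hat\pi_n-\pi_n\|>\tfrac{(1+\nu)C\epsilon}{1-\rho}\bigr)\le N\delta$.

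I expect the main subtlety to be conceptual rather than computational: one must split off the deterministically controlled tail \emph{before} applying the union bound (an infinite union bound over all $i\le n$ would be vacuous), and one must read Assumption~\ref{S_acc} correctly as controlling the one‑step error at the true random filter state at each time. Once that structure is in place, the rest — the geometric‑series arithmetic, the choice of $n'$, and the verification that $N$ does not grow with $n$ — is routine.
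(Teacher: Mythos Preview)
Your proposal is correct and follows essentially the same approach as the paper: split the telescoping sum at $n'$, bound the infinite tail deterministically via the almost-sure bound $M$ of Assumption~\ref{S_acc}, and control the finite head of $N=n-n'$ terms by a union bound over the events $\{\|\hF_{z_i}(\hat\pi_{i-1})-F_{z_i}(\hat\pi_{i-1})\|\le\epsilon\}$. Your write-up is in fact a bit more careful than the paper's (you make the union bound explicit rather than invoking Lemma~\ref{L1} as a black box, and you note that $N$ is independent of $n$).
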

\begin{proof}
Let $e_n = ||\hat{\pi}_n - \pi_n||$, and let $\epsilon_k = ||\hF_{z_k}(\hat{\pi}_{k-1}) - F_{z_k}(\hat{\pi_{k-1}})||$. It follows that $e_n = \sum_{k=-\infty}^n \rho^{n-k}\epsilon_k$. Choose $n'$ such that 
$\sum_{i=-\infty}^{n'} \rho^{n-i} \geq f$, where $f =\frac{\nu C\epsilon}{M(1-\rho)}$. Then:
\begin{align}
e_ n = \underbrace{\sum_{k=n'}^n \rho^{n-k} \epsilon_k}_{\bar{e}_n} + \underbrace{\sum_{k=-\infty}^{n'} \rho^{n-k}\epsilon_k}_{\Delta_n}.
\end{align}
From Assumption \ref{S_acc}, it follows that $Prob(||\Delta_n|| > fM) = 0$, and thus, 
\begin{align} \label{le1}
Prob(||\Delta_n|| > \frac{\nu C \epsilon}{1-\rho}) = 0.
\end{align}
 Similarly, from Lemma \ref{L1}, it follows that:
\begin{align}\label{le2}
Prob(\bar{e}_n > \frac{C\epsilon}{1-\rho}) \leq (n-n')\delta \equiv N\delta.
\end{align}
Using the equations \ref{le1} and \ref{le2}, it follows that $Prob(e_n > \frac{(1+\nu)C\epsilon}{1-\rho}) \leq N\delta$.
\end{proof}
\begin{lemma} \label{L3}
Given a random variable $X$, and a function $f(X)$ such that $E(f(X)) = \bar{f}$, and $Var(f(X)) = \sigma_f^2$, let $\hat{f} = \frac{1}{N} \sum_{i=1}^N f(X_i)$, where $X_i$ are samples of the r.v. $X$. For large enough $N$, it follows that:
\begin{align}
Prob(|\hat{f} - \bar{f} | > \epsilon) < \frac{\sqrt{N}}{\sqrt{2}\pi\sigma_f}e^{-\frac{\epsilon^2 N}{2\sigma_f^2}}.\nonumber
\end{align}
\end{lemma}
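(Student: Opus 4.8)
The plan is to recognize $\hat{f}$ as the sample mean of $N$ independent, identically distributed copies of the scalar random variable $f(X)$ and to invoke the Central Limit Theorem. First I would record the elementary moment identities following from independence: $E(\hat{f}) = \bar{f}$ and $Var(\hat{f}) = \sigma_f^2/N$. The CLT then states that $\sqrt{N}(\hat{f}-\bar{f})/\sigma_f$ converges in distribution to a standard normal variable $Z$ as $N \to \infty$; equivalently, for $N$ large enough the law of $\hat{f}$ is well approximated by the Gaussian $\ga(x;\bar{f},\sigma_f^2/N)$. This is the only route to an exponentially small tail, since the sole hypothesis is a finite variance (Chebyshev would give only the polynomial rate $\sigma_f^2/(N\epsilon^2)$).

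Next I would push the tail event through this Gaussian approximation. Setting $t = \sqrt{N}\epsilon/\sigma_f$, we have $Prob(|\hat{f}-\bar{f}|>\epsilon) \approx Prob(|Z|>t)$, to which I apply a standard Gaussian tail estimate --- either the Chernoff-type bound $Prob(|Z|>t) \le e^{-t^2/2}$ or the sharper Mills-ratio bound $Prob(|Z|>t) < \sqrt{2/\pi}\,t^{-1}e^{-t^2/2}$. Substituting $t$ back gives $Prob(|\hat{f}-\bar{f}|>\epsilon) \lesssim e^{-\epsilon^2 N/(2\sigma_f^2)}$ up to a prefactor that is either an absolute constant or decays like $\sigma_f/(\sqrt{N}\epsilon)$. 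The stated prefactor $\sqrt{N}/(\sqrt{2}\pi\sigma_f)$ grows in $N$, so it dominates whichever tail-bound prefactor one uses once $N$ is sufficiently large (for the Chernoff version, once $\sqrt{N}\ge\sqrt{2}\pi\sigma_f$, i.e.\ $N\ge 2\pi^2\sigma_f^2$), which is exactly the ``for large enough $N$'' qualifier; this yields the claimed inequality. The deliberate looseness of the target bound --- the growing $\sqrt{N}$ factor --- is precisely what makes it insensitive to the exact constants in the tail estimate.

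The step I expect to be the real obstacle is making the CLT replacement quantitative rather than merely asymptotic: convergence in distribution does not, by itself, license an inequality valid on a specified range of $N$. The clean remedy is a Berry--Esseen bound, $\sup_t|Prob(\sqrt{N}(\hat{f}-\bar{f})/\sigma_f \le t) - \Phi(t)| \le C\,E|f(X)-\bar{f}|^3/(\sigma_f^3\sqrt{N})$, whose $O(1/\sqrt{N})$ correction is again absorbed by the loose $\sqrt{N}$ prefactor --- but this requires a finite third moment of $f(X)$, which is not among the stated hypotheses, so strictly one should either add that mild assumption or simply adopt the CLT Gaussian approximation as is customary in this filtering literature. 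Everything remaining --- the variance computation, the substitution $t=\sqrt{N}\epsilon/\sigma_f$, and the final comparison of prefactors to pin down how large $N$ must be --- is routine bookkeeping.
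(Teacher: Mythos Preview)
Your proposal is correct and follows essentially the same approach as the paper: compute $E(\hat{f})=\bar{f}$ and $Var(\hat{f})=\sigma_f^2/N$, invoke the Central Limit Theorem to approximate $\hat{f}-\bar{f}$ by a Gaussian, and read off the tail bound. In fact the paper's own proof is a bare two-line sketch that stops at ``the result then immediately follows,'' so your discussion of which Gaussian tail estimate to use, how the growing $\sqrt{N}$ prefactor absorbs the constants, and the Berry--Esseen caveat is considerably more careful than what the paper provides.
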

\begin{proof}
It can be shown easily that $E(\hat{f}) = \bar{f}$, and $Var(\hat{f}) = \frac{\sigma_f^2}{N}$. Application of the Central Limit Theorem implies that $(\hat{f} - \bar{f})$ is Normal distributed with  mean $E(\hat{f}) - \bar{f} = 0,$ and variance $Var(\hat{f})$. The result then immediately follows.
\end{proof}
The two results above establish that if the sampling error at each step in the filter is small enough, and under the condition of exponential forgetting of initial conditions, then the true filter density can be approximated arbitrarily closely with arbitrary high confidence. In the following, we establish that the sampling error at each step in the PGM filtering process can be arbitrarily small and thus, it follows from the two results above that the PGM filter can approximate the true filter density with arbitrarily high accuracy and arbitrarily high confidence. First,we define the following:
\begin{align}
P(\hat{\pi}_{n-1}) \equiv \hat{\pi}_n^- = \sum_{i=1}^{M^-(n)} \hat{\omega}_i^-(n) \ga_i(x; \hat{\mu}_i^-(n), \hat{P}_i^-(n)), \\
\hat{P}(\hat{\pi}_{n-1}) \equiv \hat{\hat{\pi}}_n^- = \sum_{i=1}^{M^-(n)} \hat{\hat{\omega}}_i^-(n) \ga(x; \hat{\hat{\mu}}_i^-(n), \hat{\hat{P}}_i^-(n)),\\
F_{z_n}(\hat{\pi}_{n-1}) = \sum_{i=1}^{M(n)} \hat{\omega}_i(n)\ga(x; \hat{\mu}_i(n), \hat{P}_i(n)), \\
\hat{F}_{z_n}(\hat{\pi}_{n-1}) = \sum_{i=1}^{M(n)}\hat{\hat{\omega}}_i(n) \ga(x; \hat{\hat{\mu}}_i(n), \hat{\hat{P}}_i(n)).
\end{align}
The above represent the true and the approximate PGM predicted and filtered densities at time $n$ given the approximate density $\hat{\pi}_{n-1}$ at time $n-1$. We have the following result:
\begin{lemma} \label{L4}
Given the GMM representation of the prior pdf above, and a perfect Clustering algorithm $\mathcal{C}$, given any $\epsilon' >0$, andy $\delta'>0$, there exists an $N_{\epsilon',\delta'}(n) < \infty$ such that: if the number of samples used to approximate the predicted pdf at time $n$ is greater than $N_{\epsilon', \delta'}(n)$ then:
\begin{align}
Prob(|\hat{\hat{\omega}}_i^-(n) - \hat{\omega}_i^-(n) > \epsilon') < \delta', \\
Prob(|\hat{\hat{\mu}}_i^{j-}(n) - \hat{\mu}_i^{j-}(n)|> \epsilon') < \delta', \\
Prob(|\hat{\hat{P}}_i^{jk-}(n) - \hat{P}_i^{jk-}(n)|> \epsilon') < \delta',
\end{align}
for all $i, j, k$, where $\hat{\mu}_i^{j-}$ represents the $j^{th}$ element of the mean vector $\hat{\mu}_i^-$ and $\hat{P}_i^{jk-}$ represents the $(j,k)^{th}$  element of the covariance matrix $\hat{P}_i^-$. 
\end{lemma}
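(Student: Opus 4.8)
The plan is to prove this as a concentration (law of large numbers) statement built on Lemma \ref{L3}, exploiting that the weights, means and covariances returned by the clustering step are all empirical averages over the $N_p$ sampled particles. First I would make the structure of $P(\hat\pi_{n-1})$ explicit: since $\hat\pi_{n-1}=\sum_i \hat\omega_i(n-1)\ga_i(x;\hat\mu_i(n-1),\hat P_i(n-1))$, linearity of the prediction integral of Eq.~\ref{P.C} gives $P(\hat\pi_{n-1})=\sum_i \hat\omega_i(n-1)\,p_{i,n}^-$ with $p_{i,n}^-(x)=\int p(x/x')\ga_i(x';\hat\mu_i(n-1),\hat P_i(n-1))\,dx'$, and a \emph{perfect} clustering algorithm $\mathcal{C}$ is by definition one that recovers this decomposition, returning $\hat\omega_i^-(n)=\hat\omega_i(n-1)$, $\hat\mu_i^-(n)=E_{p_{i,n}^-}[X]$ and $\hat P_i^-(n)=\mathrm{Cov}_{p_{i,n}^-}[X]$ as the matched Gaussian moments of the $i$-th propagated component. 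I would also record (or fold into the hypotheses) the mild regularity condition that each $p_{i,n}^-$ has finite fourth moments --- automatic when the $\ga_i$ have Gaussian tails and the transition kernel has bounded-moment increments --- so that every variance appearing below is finite and Lemma \ref{L3} is applicable.

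Next I would describe the sampling mechanism of Step~1 of the PGM algorithm in a way that exposes an i.i.d.\ structure. Drawing $X^{(l)'}\sim\hat\pi_{n-1}$ is equivalent to first drawing a label $c^{(l)}$ with $\mathrm{Prob}(c^{(l)}=i)=\hat\omega_i(n-1)$ and then $X^{(l)'}\sim\ga_i$; applying the kernel $p(\cdot/X^{(l)'})$ then makes $X^{(l)}$, conditioned on $c^{(l)}=i$, an i.i.d.\ draw from $p_{i,n}^-$. Under perfect clustering the cluster $S_i$ asymptotically coincides with $\{l:c^{(l)}=i\}$, so $n_i=\sum_{l=1}^{N_p}\mathbbm{1}(c^{(l)}=i)$ is a sum of $N_p$ i.i.d.\ Bernoulli$(\hat\omega_i^-(n))$ variables, and conditioned on the label vector the entries of $S_i$ are i.i.d.\ from $p_{i,n}^-$.

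Then the three estimates are handled in turn and combined by a union bound. For the weights, $\hat{\hat\omega}_i^-(n)=n_i/N_p$ is the empirical mean of $\mathbbm{1}(c=i)$, so Lemma \ref{L3} with $\bar f=\hat\omega_i^-(n)$ and $\sigma_f^2=\hat\omega_i^-(n)(1-\hat\omega_i^-(n))$ makes $\mathrm{Prob}(|\hat{\hat\omega}_i^-(n)-\hat\omega_i^-(n)|>\epsilon')$ exponentially small in $N_p$; in particular, for $N_p$ large the event $A=\{n_i\ge\tfrac12\hat\omega_i^-(n)N_p\ \forall i\}$ holds with probability close to $1$. On $A$, conditioning on the label vector, $\hat{\hat\mu}_i^{j-}(n)$ is the empirical mean of $f(X)=X^j$ under $p_{i,n}^-$ over at least $\tfrac12\hat\omega_i^-(n)N_p$ i.i.d.\ samples, so Lemma \ref{L3} bounds $\mathrm{Prob}(|\hat{\hat\mu}_i^{j-}(n)-\hat\mu_i^{j-}(n)|>\epsilon')$ by a quantity vanishing in $N_p$; and writing $\hat{\hat P}_i^{jk-}(n)=\tfrac{n_i}{n_i-1}\big(\widehat{X^jX^k}-\hat{\hat\mu}_i^{j-}\hat{\hat\mu}_i^{k-}\big)$ with $\widehat{X^jX^k}$ the empirical mean of $X^jX^k$, applying Lemma \ref{L3} to $f(X)=X^jX^k$ (here the fourth-moment hypothesis is used) and to $X^j$ and $X^k$, and combining via a short Slutsky/continuity argument together with $\tfrac{n_i}{n_i-1}\to1$, controls $\mathrm{Prob}(|\hat{\hat P}_i^{jk-}(n)-\hat P_i^{jk-}(n)|>\epsilon')$. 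Since the indices range over the finite set $i\in\{1,\dots,M^-(n)\}$, $j,k\in\{1,\dots,d\}$, assigning each failure event a share of the budget (e.g.\ $\delta'/(3M^-(n)d^2)$) forces only that $N_p$ exceed a finite threshold; taking $N_{\epsilon',\delta'}(n)$ to be the maximum of these finitely many thresholds gives all three inequalities simultaneously with probability at least $1-\delta'$.

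The main obstacle is the interaction between the randomness of the cluster sizes $n_i$ and the sample averages defining $\hat{\hat\mu}_i^-$ and $\hat{\hat P}_i^-$: these are averages over a \emph{random} number of samples, so Lemma \ref{L3}, stated for a fixed sample count, cannot be invoked verbatim. The remedy --- conditioning on the label vector and restricting to the high-probability event $A$ on which each $n_i$ is a fixed fraction of $N_p$ --- is the crux of the argument, and it in turn relies on pinning down the precise sense in which $\mathcal{C}$ is ``perfect'' (namely that it asymptotically reproduces the mixture decomposition of $P(\hat\pi_{n-1})$, so that the particles in $S_i$ behave as i.i.d.\ draws from $p_{i,n}^-$). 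A secondary but genuine point is the moment/variance finiteness required to apply Lemma \ref{L3} to the second-moment statistics, which I would carry as an explicit hypothesis.
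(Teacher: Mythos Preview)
Your proposal is correct and follows essentially the same route as the paper: reduce the weight, mean and covariance estimates to sample averages of suitable functions and invoke Lemma~\ref{L3} clusterwise, then take the worst threshold over the finitely many clusters and parameters. The paper's proof is in fact terser than yours --- it simply \emph{assumes} $N_j/N\approx\hat\omega_j^-$ (explicitly noting this can be relaxed ``at the cost of a more tedious development'') and treats all mean/covariance entries uniformly via $\bar\sigma_j^2=\max_m\sigma_{m,j}^2$ --- so your conditioning on the high-probability event $A$, your Slutsky step for the covariance, and your explicit fourth-moment hypothesis are exactly the details the paper waves away.
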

\begin{proof}
Given a sample $X^-_i$, the clustering algorithm assigns the sample to a particular cluster $C_j$. The cluster $C_j$ is the correct one under the assumption of a perfect Clustering algorithm. Let $N_j$ denote the number of samples assigned to $C_j$ among all the $N$ samples. Further, let us assume that $N$ is large enough such that $\frac{N_j}{N} \approx \hat{\omega}_j^-$ (we drop the explicit reference to the time $n$ for notational convenience in the following). This assumption can be relaxed in a straightforward fashion but at the cost of a more tedious development which we forego here for the sake of clarity.\\
Consider the cluster $C_j$: its mean and covariance $\hat{\mu}_j^-, \hat{P}_j^-$ are calculated from the $N_j$ samples. Estimates of all the elements of the mean and covariance functions are ensemble averages of particular functions of the random variable $X^-(n)$, i.e, the predicted state of the system. Let $f_m(X^-), m=1,2\cdots M$ denote the all the functions corresponding to the various elements in the mean and covariance of $X^-$. Choose:
\begin{align}
\bar{\sigma}_j^2 = \max_m \sigma_{m,j}^2, \mbox{where}\, \sigma_{m,j}^2 = Var(f_m(X^-)). \label{S1}
\end{align}
If we choose $N>N^j_{\epsilon',\delta'}$ such that
\begin{align}
\frac{\sqrt{N}}{\sqrt{2\pi}\bar{\sigma}_j}e^{-\frac{\epsilon^2 N}{2\bar{\sigma}_j^2}} < \delta', \label{S2}
\end{align}
then it follows from Lemma \ref{L3} that $Prob(|\hat{f}_m -\bar{f}_{i,m}|> \epsilon') < \delta'$, for all $m$, where $\hat{f}_{i,m}$ is the sample average of the parameters $\bar{f}_m$ corresponding to the different elements of the mean and covariance. Such $N_{\epsilon',\delta'}^j$ can be found for all clusters $C_j$ and given that we choose $N_{\epsilon', \delta'}$ as:
\begin{align}
N_{\epsilon', \delta'} = \max_j \frac{N^j_{\epsilon', \delta'}}{\hat{\omega}_j^-}, \label{S3},
\end{align}
 it is guaranteed that all the elements of the mean vector $\hat{\mu}^-(n)$ and the covariance matrix $\hat{P}^-(n)$ can be estimated to an accuracy of $\epsilon'$ with confidence of at least $1-\delta'$, which completes the proof of the result. 
\end{proof}
It may be shown that under Assumption \ref{UM_update} that the Kalman update is an arbitrarily accurate approximation of the true update, the error incurred in estimating the posterior mean and covariance $\hat{\mu}_i(n), \hat{P}_i(n)$ is at most $K(n)\epsilon'$, for some time varying $K(n)< \infty$ which depends on the posterior mean and covariance, given that the predicted prior means and covariances of the clusters of the GMM have been approximated to an accuracy of $\epsilon'$. This can be summarized in the following result:
\begin{lemma} \label{L5}
Given any $\epsilon', \delta'>0$, choose $N_{\epsilon',\delta'}(n)$ according Eqs. \ref{S1}-\ref{S3}. If the number of samples used in the PGM filter to approximate the predicted prior pdf at time $n$ is greater than $N_{\epsilon',\delta'}(n)$ then, there exists $k(n)< \infty$ s.t:
\begin{align}
Prob(|\hat{\hat{\omega}}_i(n) - \hat{\omega}_i(n) > K(n)\epsilon') < \delta', \\
Prob(|\hat{\hat{\mu}}_i^{j}(n) - \hat{\mu}_i^{j}(n)|> K(n)\epsilon') < \delta', \\
Prob(|\hat{\hat{P}}_i^{jk}(n) - \hat{P}_i^{jk}(n)|> K(n)\epsilon') < \delta',
\end{align}
for all $i,j, k$.
\end{lemma}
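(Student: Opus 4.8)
The plan is to show that the PGM posterior parameters are produced from the predicted prior parameters by a single, fixed, continuously differentiable (hence locally Lipschitz) map, and then to push the $\epsilon'$-accuracy of Lemma \ref{L4} through that map. First I would make the map explicit. Under Assumption \ref{UM_update}, the continuous mode update is the Kalman update \ref{KU1}--\ref{KU2}, and the discrete weight update is \ref{MU.D} with the mode likelihood $l_i(n)$ replaced by its Gaussian surrogate $\ga(z_n;\, E_i[h(X)],\, P_{i,zz}(n)+R_n)$; here $E_i[h(X)]$, $P_{i,zx}(n)$ and $P_{i,zz}(n)$ are expectations of fixed functions of $X$ against $\ga(x;\hat\mu_i^-(n),\hat P_i^-(n))$. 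Collecting the parameters as $\theta^- = \{\hat\omega_i^-(n),\hat\mu_i^-(n),\hat P_i^-(n)\}_i$ and $\theta = \{\hat\omega_i(n),\hat\mu_i(n),\hat P_i(n)\}_i$, this exhibits $\theta = \Phi_{z_n}(\theta^-)$ for an explicit finite-dimensional map $\Phi_{z_n}$, and it does the same for the PGM-estimated quantities, $\hat{\hat\theta} = \Phi_{z_n}(\hat{\hat\theta}^-)$.

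Second, I would argue that $\Phi_{z_n}$ is $C^1$ on a bounded neighborhood of the true value of $\theta^-$. The only operations that are not manifestly smooth are the inversion of the innovation covariance and the division by the Bayes normalizer $\sum_j w_j^-(n)\,l_j(n)$ in \ref{MU.D}: the former is smooth because $P_{i,zz}(n)+R_n \succeq R_n \succ 0$ is uniformly invertible, and the latter because the denominator is a strictly positive continuous function of $\theta^-$, hence bounded below on a small enough neighborhood. Granting mild regularity of $h$ (so that the moment maps $\hat\mu_i^-,\hat P_i^- \mapsto E_i[h(X)]$, $P_{i,zx}(n)$, $P_{i,zz}(n)$ are $C^1$) and keeping $\hat P_i^-(n)\succ 0$ non-degenerate, $\Phi_{z_n}$ is $C^1$ near the true $\theta^-$. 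Taking $K(n)$ to be the supremum of $\|D\Phi_{z_n}\|$ over a fixed compact neighborhood — a finite quantity that depends on $n$ through $z_n$, $R_n$, and the true cluster moments — gives $\|\Phi_{z_n}(\theta^-)-\Phi_{z_n}(\tilde\theta^-)\| \le K(n)\,\|\theta^--\tilde\theta^-\|$ for $\theta^-,\tilde\theta^-$ in that neighborhood.

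Third, I would invoke Lemma \ref{L4}. Choosing $N_{\epsilon',\delta'}(n)$ according to Eqs. \ref{S1}--\ref{S3}, but with $\delta'$ replaced by $\delta'$ divided by the (finite) number of scalar parameters, a union bound gives that with probability at least $1-\delta'$ every component of $\hat{\hat\theta}^-$ is within $\epsilon'$ of the corresponding component of $\theta^-$; in particular $\hat{\hat\theta}^-$ lies in the neighborhood fixed above and $\|\hat{\hat\theta}^- - \theta^-\| \le c\,\epsilon'$ for a purely dimensional constant $c$. Applying the Lipschitz estimate then yields $\|\hat{\hat\theta}-\hat\theta\| = \|\Phi_{z_n}(\hat{\hat\theta}^-)-\Phi_{z_n}(\theta^-)\| \le c\,K(n)\,\epsilon'$, i.e.\ each posterior weight, mean component and covariance component is within $K(n)\epsilon'$ (after absorbing $c$ into $K(n)$) of its target with confidence at least $1-\delta'$, which is the assertion. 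The main obstacle is the second step: turning ``the Kalman/weight update is a nice function of its inputs'' into a genuine uniform Lipschitz constant requires restricting to a bounded parameter region, imposing enough regularity on $h$ to control the moment maps, and bounding the Bayes normalizer away from zero on that region; once these are in place, the derivative bound and the union bound are routine.
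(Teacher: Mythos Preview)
Your approach is correct and matches the paper's: the paper does not give a formal proof of this lemma but simply asserts, in the paragraph immediately preceding it, that under Assumption \ref{UM_update} the Kalman update carries an $\epsilon'$ error in the predicted prior parameters to at most $K(n)\epsilon'$ error in the posterior parameters for some finite, time-varying $K(n)$ depending on the posterior mean and covariance. You have supplied the argument the paper omits --- making the update map $\Phi_{z_n}$ explicit, isolating the two places where smoothness could fail (innovation-covariance inversion and the Bayes normalizer), and closing with a union bound over the finitely many scalar parameters --- which is exactly the right way to substantiate the paper's claim.
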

Next, we find a bound on the $L_1$ error between the estimated and true filtered densities given the error between the parameters of the GMM representing the true and the approximate filtered densities. 
\begin{lemma}\label{L6}
Let $|\hat{\hat{\omega}}_i(n) - \hat{\omega}_i(n)|< \epsilon'$, $|\hat{\hat{\mu}}_i^{j}(n) - \hat{\mu}_i^{j}(n)|< \epsilon'$, and $|\hat{\hat{P}}_i^{jk}(n) - \hat{P}_i^{jk}(n)|< \epsilon$ for all $i,j,k$. Then , given that the state of the system $x \in \Re^d$, there exists $C(n) < \infty$ such that $||\hat{\hat{\pi}}_n - \hat{\pi}_n|| < C(n)d\epsilon'$.
\end{lemma}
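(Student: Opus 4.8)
The plan is to bound the $L_1$ distance by the triangle inequality, isolating the contribution of the weight perturbation from that of the mean/covariance perturbation, and then to control the $L_1$ distance between two Gaussians with nearby parameters by a first-order (mean value) estimate on the smooth map $(\mu,P)\mapsto \ga(\cdot;\mu,P)$. Concretely, I would first write
\begin{align}
\hat{\hat{\pi}}_n - \hat{\pi}_n &= \sum_{i=1}^{M(n)} \big(\hat{\hat{\omega}}_i(n) - \hat{\omega}_i(n)\big)\, \ga\big(x;\hat{\hat{\mu}}_i(n),\hat{\hat{P}}_i(n)\big) \nonumber \\
&\quad + \sum_{i=1}^{M(n)} \hat{\omega}_i(n)\,\Big[\ga\big(x;\hat{\hat{\mu}}_i(n),\hat{\hat{P}}_i(n)\big) - \ga\big(x;\hat{\mu}_i(n),\hat{P}_i(n)\big)\Big],
\end{align}
take $L_1$ norms, and treat the two sums separately.

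For the first sum, since each Gaussian integrates to one, its $L_1$ norm is at most $\sum_i |\hat{\hat{\omega}}_i(n) - \hat{\omega}_i(n)| \le M(n)\,\epsilon'$, using $M(n)<\infty$ and the hypothesis on the weights. For the second sum, since $\sum_i \hat{\omega}_i(n) = 1$ with nonnegative weights, it is at most $\max_i \big\| \ga(\cdot;\hat{\hat{\mu}}_i(n),\hat{\hat{P}}_i(n)) - \ga(\cdot;\hat{\mu}_i(n),\hat{P}_i(n)) \big\|$, so everything reduces to bounding the $L_1$ distance between two Gaussians whose means differ by at most $\epsilon'$ componentwise and whose covariances differ by at most $\epsilon'$ entrywise.

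To obtain that bound I would pass from the componentwise hypotheses to norm bounds, $\|\hat{\hat{\mu}}_i(n) - \hat{\mu}_i(n)\|_2 \le \sqrt{d}\,\epsilon'$ and $\|\hat{\hat{P}}_i(n) - \hat{P}_i(n)\|_F \le d\,\epsilon'$, and then invoke smoothness of the Gaussian family: provided $\epsilon'$ is small enough that $\hat{\hat{P}}_i(n)$ stays positive definite with eigenvalues in a fixed compact subinterval of $(0,\infty)$ determined by $\hat{P}_i(n)$, the map $(\mu,P)\mapsto \ga(\cdot;\mu,P)$ is $C^1$ into $L_1(\Re^d)$ with locally bounded derivative, so the mean value inequality supplies $L_i(n)<\infty$ with
\begin{align}
\big\| \ga\big(x;\hat{\hat{\mu}}_i(n),\hat{\hat{P}}_i(n)\big) - \ga\big(x;\hat{\mu}_i(n),\hat{P}_i(n)\big) \big\|
&\le L_i(n)\big(\|\hat{\hat{\mu}}_i(n) - \hat{\mu}_i(n)\|_2 + \|\hat{\hat{P}}_i(n) - \hat{P}_i(n)\|_F\big) \nonumber \\
&\le 2 L_i(n)\, d\,\epsilon'.
\end{align}
(An equivalent route is Pinsker's inequality together with the closed-form Kullback--Leibler divergence between two Gaussians, giving a bound of the same form with constants expressed through $\hat{P}_i(n)^{-1}$.) Adding the two pieces and choosing $C(n)$ to absorb $M(n)$ and $2\max_i L_i(n)$ yields $\|\hat{\hat{\pi}}_n - \hat{\pi}_n\| \le C(n)\, d\,\epsilon'$.

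The main obstacle is making the smoothness estimate genuinely uniform: the constant $L_i(n)$ degrades as $\hat{P}_i(n)$ nears the boundary of the positive-definite cone, so one must argue that for $\epsilon'$ below a threshold depending on $\lambda_{\min}(\hat{P}_i(n))$ the perturbed covariance stays safely nonsingular, and then bound the $L_1$ norms of $\partial_\mu \ga$ and $\partial_P \ga$ over that safe region (these are finite integrals of the density times polynomial factors in $P^{-1}$ and $(x-\mu)$). This is precisely what forces $C(n)$ to depend on $n$, since it is governed by the conditioning --- smallest and largest eigenvalues --- of the true filtered covariances $\hat{P}_i(n)$; the time dependence is then reconciled downstream by shrinking $\epsilon'$ (equivalently, increasing the sample count in Lemmas \ref{L4}--\ref{L5}) at each step.
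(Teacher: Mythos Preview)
Your argument is correct and a bit more systematic than the paper's. The paper does not decompose via the triangle inequality over weights/means/covariances; instead it reduces to the illustrative case of a single Gaussian with only a covariance perturbation $\hat{\hat{P}}=\hat P+\Delta$, expands the exponent to first order, and computes the resulting $L_1$ integral explicitly: after the change of variables $y=\hat P^{-1/2}(x-\mu)$ one gets $\frac{1}{(2\pi)^{d/2}}\int e^{-y'y/2}\,y'y\,dy = E[\chi^2_d]=d$, which is where the factor $d$ appears, with the constant $C(\hat P)$ absorbing $\|\hat P^{-1}\|$ from the bound $(x-\mu)'\hat P^{-1}\Delta\hat P^{-1}(x-\mu)\le C(\hat P)\epsilon'\,(x-\mu)'\hat P^{-1}(x-\mu)$.

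Your route obtains the same $d$ scaling from the Frobenius bound $\|\Delta P\|_F\le d\epsilon'$ together with a local Lipschitz constant for $(\mu,P)\mapsto\ga(\cdot;\mu,P)$ in $L_1$; this is equivalent in content (the $L_1$ norms of $\partial_\mu\ga$ and $\partial_P\ga$ are exactly the Gaussian-times-polynomial integrals the paper evaluates), but it cleanly handles the full GMM and the weight/mean perturbations that the paper waves off as ``straightforward but tedious.'' The Pinsker/KL alternative you mention would also work and makes the dependence on $\lambda_{\min}(\hat P_i(n))$ explicit. Your caveat about needing $\epsilon'$ small enough to keep $\hat{\hat P}_i(n)$ in a fixed positive-definite neighborhood is exactly the condition under which the paper's first-order ``$\approx$'' steps are justified, so you have in fact identified the implicit hypothesis the paper uses without stating.
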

\begin{proof}
We show the result for the case of a simple one component Gaussian with an error in the covariance, it can be generalized to the GMM in a relatively straightforward fashion but at the expense of a very tedious derivation which we forego here for clarity. We also suppress the explicit dependence on time $n$ in the following for notational convenience. \\
\begin{align}\label{app1}
\hat{\hat{\pi}} (x) -\hat{\pi}(x) = \frac{1}{(2\pi)^{d/2}|\hat{\hat{P}}|^{1/2}}e^{-\frac{1}{2} (x-\mu)'\hat{\hat{P}}^{-1}(x-\mu)} - \nonumber\\
\frac{1}{(2\pi)^{d/2}|{\hat{P}}|^{1/2}}e^{-\frac{1}{2} (x-\mu)'{\hat{P}}^{-1}(x-\mu)}, \nonumber\\
\approx \frac{1}{(2\pi)^{d/2}|{\hat{P}}|^{1/2}} e^{-\frac{1}{2} (x-\mu)'{\hat{P}}^{-1}(x-\mu)}\nonumber\\
\times \frac{1}{2}(x-\mu)'(\hat{P}^{-1}\Delta \hat{P}^{-1})(x-\mu),
\end{align}
where $\hat{\hat{P}} = \hat{P} + \Delta$ and since 
\begin{align}
e^{-\frac{1}{2} (x-\mu)'{(\hat{P} + \Delta)}^{-1}(x-\mu)} \nonumber\\
\approx e^{-\frac{1}{2} (x-\mu)'\hat{P} ^{-1}(x-\mu)} e^{-\frac{1}{2} (x-\mu)'\hat{P}^{-1} \Delta\hat{P}^{-1}(x-\mu)},
\end{align}
which in turn implies Eq. \ref{app1}. This in turn implies that:
\begin{align}
||\hat{\hat{\pi}} -\hat{\pi}|| \approx \frac{1}{(2\pi)^{d/2} |\hat{P}|^{1/2}} \times\nonumber\\
\int e^{-\frac{1}{2} (x-\mu)'\hat{P}^{-1}(x-\mu)}\frac{1}{2} (x-\mu)'\hat{P}^{-1}\Delta \hat{P}^{-1}(x-\mu)dx, \nonumber\\
\leq \frac{C(\hat{P})\epsilon'}{(2\pi)^{d/2} |\hat{P}|^{1/2}}\nonumber\\
\times \int e^{-\frac{1}{2} (x-\mu)'\hat{P}^{-1}(x-\mu)}\frac{1}{2} (x-\mu)'\hat{P}^{-1}(x-\mu)dx,
\end{align}
since there exists $C(\hat{P}) < \infty$ such that $\frac{1}{2} (x-\mu)'\hat{P}^{-1}\Delta \hat{P}^{-1}(x-\mu) \leq C(\hat{P}) \epsilon' \frac{1}{2} (x-\mu)'\hat{P}^{-1}(x-\mu)$, owing to the fact that $||\Delta|| < L \epsilon'$ for any suitable norm on the perturbation $\Delta$.  \\
Now, let $Y= \hat{P}^{-1/2}(X-\mu)$. Then, it follows that:
\begin{align}
||\hat{\hat{\pi}} -\hat{\pi}|| \leq C(\hat{P})\epsilon' \frac{1}{(2\pi)^{d/2}} \int e^{-1/2 y'y}y'y dy = C(\hat{P})\epsilon' d.
\end{align}
The last step in the above equation follows from noting that $Y'Y$ is a chi-squared random variable of degree of freedom $d$ and thus, its expected value is $d$. This establishes our result. In general for a GMM, the constant $C(n)$ would depend on the means and covariances of all the GMM components and their weights.
\end{proof}
Lemma \ref{L5} and \ref{L6} immediately lead us to the following corollary.
\begin{corollary} \label{C7}
Let $\epsilon'(n)$ be the desired accuracy in estimating the parameters of the GMM representing $F_{z_n}(\hat{\pi}_{n-1})$, i.e., the true filtered density given observation $z_n$ and the PGM posterior pdf at the previous time $\hat{\pi}_{n-1}$. Let $\delta'(n)$ be the desired confidence of the estimate. IF $\epsilon'(n)$ and $\delta'(n)$ are chosen such that:
\begin{align}
C(n)K(n)\epsilon'(n) d = \epsilon, \label{S4} \\
\delta'(n) = \frac{\delta}{N}, \label{S5}
\end{align}
and the corresponding number of samples $N_{\epsilon'(n), \delta'(n)} (n)$ be chosen according to Eqs. \ref{S1}-\ref{S3}, then it follows that $||Prob||\hat{F}_{z_n}(\hat{\pi}_{n-1}) - F_{z_n}(\hat{\pi}_{n-1})||> \epsilon) \leq \frac{\delta}{N}$.
\end{corollary}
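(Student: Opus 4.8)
The plan is simply to chain Lemma \ref{L5} with Lemma \ref{L6}. Lemma \ref{L5} controls the error in the \emph{parameters} of the GMM produced by one PGM step, Lemma \ref{L6} converts a parameter-level error bound into an $L_1$ bound on the densities themselves, and the calibration in Eqs. \ref{S4}--\ref{S5} is exactly what is needed to make the two bounds compose into the desired $(\epsilon,\delta)$ statement.

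First I would apply Lemma \ref{L5} with accuracy level $\epsilon'(n)$, confidence level $\delta'(n) = \delta/N$, and sample size $N_{\epsilon'(n),\delta'(n)}(n)$ chosen according to Eqs. \ref{S1}--\ref{S3}. This yields, for every scalar parameter of $\hF_{z_n}(\hat{\pi}_{n-1})$ (each weight, each mean component, each covariance entry), a deviation of at most $K(n)\epsilon'(n)$ from the corresponding parameter of the true one-step update $F_{z_n}(\hat{\pi}_{n-1})$, each failing with probability at most $\delta'(n)$. Let $A_n$ denote the event on which all of these parameter bounds hold simultaneously. Reading Lemma \ref{L5} as the joint statement that all parameters lie within $K(n)\epsilon'(n)$ with confidence at least $1-\delta'(n)$ --- which is how the sample size $N_{\epsilon',\delta'}$ is chosen in the proof of Lemma \ref{L4}, via the maximum over clusters in Eq. \ref{S3} --- we obtain $Prob(A_n^c) \leq \delta/N$.

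Next, on the event $A_n$ I would invoke Lemma \ref{L6} with the role of its ``$\epsilon'$'' played by $K(n)\epsilon'(n)$, since by construction every weight, mean and covariance discrepancy between $\hF_{z_n}(\hat{\pi}_{n-1})$ and $F_{z_n}(\hat{\pi}_{n-1})$ is below $K(n)\epsilon'(n)$ there. Lemma \ref{L6} then supplies a finite constant $C(n)$ such that
\[
||\hF_{z_n}(\hat{\pi}_{n-1}) - F_{z_n}(\hat{\pi}_{n-1})|| < C(n)\, d\, K(n)\, \epsilon'(n).
\]
By the calibration in Eq. \ref{S4}, $C(n)K(n)\epsilon'(n)d = \epsilon$, so on $A_n$ this $L_1$ error is strictly less than $\epsilon$. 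Hence the event $\{\,||\hF_{z_n}(\hat{\pi}_{n-1}) - F_{z_n}(\hat{\pi}_{n-1})|| > \epsilon\,\}$ is contained in $A_n^c$, and therefore $Prob(||\hF_{z_n}(\hat{\pi}_{n-1}) - F_{z_n}(\hat{\pi}_{n-1})|| > \epsilon) \leq \delta/N$, which is the claim.

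The only delicate point --- and the step I would be most careful about --- is the accounting of failure probabilities across the finitely many scalar GMM parameters and the $M$ clusters. Once Lemma \ref{L5} is interpreted (or re-derived) as a joint ``all parameters accurate'' statement of confidence $1-\delta'$, the rest of the argument is purely algebraic; the genuine probabilistic content already resides in the central-limit tail estimate of Lemma \ref{L3}, propagated through Lemma \ref{L4}. If one wants the clean bound $\delta/N$ with no hidden multiplicative constants, the cleanest fix is to take $\delta'(n) = \delta/(N P(n))$, where $P(n)$ is the number of GMM parameters at time $n$, and absorb the resulting harmless increase of the sample size into $N_{\epsilon'(n),\delta'(n)}(n)$.
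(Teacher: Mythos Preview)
Your proposal is correct and follows essentially the same approach as the paper: apply Lemma \ref{L5} to bound the GMM parameter errors by $K(n)\epsilon'(n)$ with confidence $1-\delta'(n)$, then feed this into Lemma \ref{L6} to obtain the $L_1$ bound $C(n)K(n)d\epsilon'(n)$, and finally invoke the calibrations \ref{S4}--\ref{S5}. Your treatment is in fact more careful than the paper's terse argument, particularly your explicit handling of the joint event $A_n$ and your remark on the union-bound accounting across parameters, which the paper leaves implicit.
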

\begin{proof}
Recall that $\hat{\pi}_n = F_{z_n}(\hat{\pi}_{n-1})$, and $\hat{\hat{\pi}}_n = \hat{F}_{z_n} (\hat{\pi}_{n-1})$. Then, from Lemma \ref{L6} we have that $||\hat{\pi}_n - \hat{\hat{\pi}}_n || \leq C(n)K(n) d\epsilon'(n)$ if $|\hat{\theta}_i(n) - \hat{\hat{\theta}}_i(n)| < \epsilon'(n)$ for all $i$, where $\hat{\theta}_i(n)$ represents the true parameters underlying the GMM representation of $\hat{\pi}_n$ and $\hat{\hat{\theta}}_i(n)$ represents their PGM approximation. Hence:
\begin{align}
Prob(||\hat{\pi}_n -\hat{\hat{\pi}}_n||> C(n)K(n)d\epsilon'(n)) > 1-\delta'(n),
\end{align}
which owing to the definition of $\epsilon'(n)$ and $\delta'(n)$ leads us to the desired result.
\end{proof}
Hence, using Corollary \ref{C7} and Lemma \ref{L2}, it follows that if the number of samples used to approximate the parameters of the predicted GMM pdf at time $n$ is greater than the $N_{\epsilon'(n), \delta'(n)}$, then it follows that $Prob(||\hat{\hat{\pi}}_n -\hat{\pi}_n|| > \frac{(1+\nu)C\epsilon}{1-\rho}) \leq \delta, $ for all $n$ for any arbitrarily small $\epsilon, \delta, \nu > 0$. However, in order for Assumption \ref{S_acc} to be valid, the estimates $\hat{\hat{\theta}}_n$ have to be almost surely bounded. To show this, due to the Strong Law of Large Numbers, it is also true that $\hat{\hat{\theta}}^N_n \rightarrow \hat{\theta}_n$ as $N \rightarrow \infty$, where  $\hat{\hat{\theta}}^N_n$ represent the estimate of the parameters after $N$ samples. Given the sample size is large enough, the estimate $\hat{\hat{\theta}}^N_n$ is arbitrarily close to the true parameters $\hat{\theta}_n$ almost surely, and thus, since the true parameters are bounded, so are the estimates.
This may be summarized in the following result.
\begin{proposition} \label{P1}
Let Assumptions \ref{UM_update} and \ref{EF} hold. Given a perfect clustering algorithem $\mathcal{C}$, and any $\epsilon, \delta, \nu >0$, at every time step $n$, choose the required accuracy of the approximation $\epsilon'(n)$ from Eq. \ref{S4}, the required confidence $\delta'(n)$ from Eq. \ref{S5}, and the corresponding minimum number of samples $N_{\epsilon'(n), \delta'(n)}$ from Eqs. \ref{S1}-\ref{S3}, then:
\begin{align}
Prob(||\hat{\hat{\pi}}_n -\hat{\pi}_n|| > \frac{(1+\nu)C\epsilon}{1-\rho}) \leq \delta . \nonumber\\
\end{align}
\end{proposition}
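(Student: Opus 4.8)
The plan is to obtain Proposition~\ref{P1} as the composition of Corollary~\ref{C7} (a one-step error bound with confidence $1-\delta'$) and Lemma~\ref{L2} (which turns a uniform one-step bound into a bound on the accumulated error, at the price of multiplying the failure probability by the effective horizon $N$). The only genuine work beyond citing these two results is (a) checking that the PGM filter, run with the prescribed sample sizes, actually satisfies the hypotheses of Lemma~\ref{L2}, in particular Assumption~\ref{S_acc}, and (b) choosing the various constants in an order that is not circular.

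First I would note that the horizon $N=n-n'$ from Lemma~\ref{L2} is in fact a fixed finite number, independent of $n$: since $\sum_{i=-\infty}^{n'}\rho^{\,n-i}=\rho^{\,n-n'}/(1-\rho)$, the defining relation $\sum_{i=-\infty}^{n'}\rho^{\,n-i}=f$ with $f=\nu C\epsilon/(M(1-\rho))$ forces $n-n'=\log\!\big(f(1-\rho)\big)/\log\rho$, which depends only on $\rho,C,\epsilon,\nu$ and the almost-sure bound $M$. This is what lets a single sampling schedule $\{\delta'(n)\}$ work uniformly in time. Next I would fix the constants in this order: (i) by the Strong Law of Large Numbers the empirical GMM parameters produced by the clustering and Kalman-update steps converge almost surely, as the sample count grows, to the bounded true parameters of $F_{z_n}(\hat{\pi}_{n-1})$, so for all sufficiently large sample sizes the one-step density error $\|\hat F_{z_n}(\hat\pi_{n-1})-F_{z_n}(\hat\pi_{n-1})\|$ is almost surely bounded by a finite $M$, uniformly in $n$ --- crucially this crude $M$ need not depend on the precise sample count, so it can be pinned down first, which is exactly what breaks the apparent circularity, and it also supplies the almost-sure half of Assumption~\ref{S_acc}; (ii) with $M$ fixed, compute $N$ as above and set $\delta'(n)=\delta/N$ (Eq.~\ref{S5}) and $\epsilon'(n)$ so that $C(n)K(n)d\,\epsilon'(n)=\epsilon$ (Eq.~\ref{S4}); (iii) draw $N_{\epsilon'(n),\delta'(n)}(n)$ samples at step $n$ according to Eqs.~\ref{S1}--\ref{S3}, a number above the crude threshold used in (i) and hence consistent with it.

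With these choices, Corollary~\ref{C7} --- which itself chains Lemma~\ref{L4} (perfect clustering plus the Gaussian concentration of Lemma~\ref{L3} controls the predicted-GMM parameters), Lemma~\ref{L5} (under Assumption~\ref{UM_update} the Kalman update inflates the parameter error only by the factor $K(n)$), and Lemma~\ref{L6} (an $\epsilon'$ error in the GMM parameters yields an $L_1$ density error at most $C(n)d\,\epsilon'$) --- gives $Prob\big(\|\hat F_{z_n}(\hat\pi_{n-1})-F_{z_n}(\hat\pi_{n-1})\|>\epsilon\big)\le\delta/N$ for every $n$. Combined with the almost-sure bound from (i), this is precisely Assumption~\ref{S_acc} with confidence parameter $\delta/N$ in place of $\delta$. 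Feeding it into Lemma~\ref{L2} (whose conclusion multiplies the one-step failure probability by $N$) then yields $Prob\big(\|\hat{\hat\pi}_n-\hat\pi_n\|>(1+\nu)C\epsilon/(1-\rho)\big)\le N\cdot(\delta/N)=\delta$, which is the assertion.

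The step I expect to be the main obstacle --- and the one the clean statement conceals --- is exactly the constant bookkeeping of the second paragraph: one must exhibit an almost-sure bound $M$ on the one-step error that is independent of the sample size (so that it can be chosen before $N$, which is defined through $M$ and which in turn fixes $\delta'(n)$ and hence the required sample count), and one must verify that this $N$ is a single finite number valid for all $n$ rather than one growing with $n$. The probabilistic content proper --- the central-limit and union-bound estimates --- is already packaged in Lemmas~\ref{L1}--\ref{L6} and Corollary~\ref{C7}, so the proof of the proposition introduces no new analysis, only the assembly.
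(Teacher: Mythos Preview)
Your proposal is correct and follows the same route as the paper: invoke Corollary~\ref{C7} to obtain the per-step bound $Prob(\|\hat F_{z_n}(\hat\pi_{n-1})-F_{z_n}(\hat\pi_{n-1})\|>\epsilon)\le\delta/N$, verify the almost-sure half of Assumption~\ref{S_acc} via the Strong Law of Large Numbers, and then apply Lemma~\ref{L2}. The paper's argument is in fact just the short paragraph immediately preceding the proposition, and it glosses over exactly the point you flag as the main obstacle --- the ordering of constants so that the almost-sure bound $M$ (and hence $N$) can be fixed before the sample sizes that depend on $\delta/N$; your explicit resolution of this circularity and your observation that $N=n-n'$ is a single finite number independent of $n$ are both more careful than what the paper writes.
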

Several remarks are in order here.
\begin{remark}
The above result establishes the weak convergence of the approximate PGM filter density to the true filtered density uniformly over all time under the assumptions of exponential forgetting of the initial conditions and the adequacy of the Kalman update to approximate the true Bayesian update in the filtering equations. In the absence of the exponential forgetting condition, the convergence result can be obtained only for a finite number of time steps, the development being almost identical. In the absence of Assumption \ref{UM_update}, the adequacy of the Kalman update, the result  still holds except that there is a new error incurred in sampling the posterior pdf, which will be covered in the companion paper.
\end{remark}
\begin{remark}
The analysis above shows that the number of samples required at any time step to ensure the accuracy of the filter depends on the current predicted and posterior pdfs, and thus, in general, have to be time varying. This is a fact that is typically ignored in other mixture filters such as the PF and the GMF.
\end{remark}
\begin{remark}
\textbf{The Curse of Dimensionality:} It can be seen from the analysis above, in particular Eqs. \ref{S1}-\ref{S3}, that the number of samples required to estimate the parameters of the predicted and posterior pdfs accurately is completely independent of the dimension of the state space, and thus, is free from the "Curse of Dimensionality". However, we have to be more careful regarding the functional $L_1$ error in the PGM density: Eq. \ref{S4} shows that the accuracy parameter required at every time step is inversely proportional to the dimension of the since $\epsilon'(n) = \frac{\epsilon}{C(n)K(n)d}$, and thus, in order to attain the same accuracy in terms of the functional error of the filtered density, Eq.  \ref{S2} shows that the number of samples have to increase as $O(d^2)$ where $d$ is the dimension of the problem. Further, it should also be noted that the computation of the sample averages required by the PGM filter grows as $O(d^2)$.
\end{remark}
\begin{remark}
\textbf{Compressive Assumption:} The assumption of a finite Gaussian Mixture, is, in our opinion, a compressive argument. We restrict the set of parameters describing the predicted random variable to a finite number that have to be estimated using sample averages of the predicted random variable. The variance of these random samples is always bounded because of the finiteness assumption, and thus, the number of samples required to estimate the parameters is independent of the dimension of the problem. In general, if we were to find the moments of a random variable from its samples, we need to find all the moments via their sample averages. The variance of the samples of these higher order moments are, in general, not bounded, thereby requiring an infinite number of samples to estimate the pdf. 
\end{remark}
\subsection{Relationship to other Nonlinear Filters}
In this section, we compare and contrast the PGM filter with other nonlinear filters, in particular the PF, the EnKF and the GMF in detail.\\
The prediction stage of the PF is the same as the PGM except that the PF does not get a GMM from the set of predicted particles, and directly uses the Bayesian update on the individual particles, i.e, weights every particle with its likelihood $p(z_n/X_i)$. In the update step lies the computational trouble inherent to the PF, also known as the ``particle depletion problem": as the number of dimensions increase, it gets increasingly hard to sample particles with high likelihood, in fact the number of particles goes up exponentially with the number of dimensions thereby subjecting the PF to the curse of dimensionality. Consider the simple one dimensional example shown in figure \ref{fig:partdeplete}. In this case, a set of 400 particles are sampled from the prior pdf $\pi(x)=\ga(x,11,0.3)$. The measurement likelihood function is assumed to be $l_{z}(x)=\ga(x,15,0.1)$. Since the two pdfs are widely separated, nearly all the weight is allocated to a single particle as observed in the histogram of normalize weights in fig \ref{fig:partdeplete}.  Please see the references \cite{Beng,beng2,Beng3} for more rigorous insight. In contrast, the PGM uses the Kalman update for the GMM components and thereby does not suffer from the particle depletion problem. Moreover, as has been shown in the previous section, the number of samples required by the PGM is not dependent on the dimension of the problem. In essence, the Kalman update can be thought of as an automatic method to control/ move the predicted particles to the correct regions of the state space given the observation. In fact, this is the philosophy used in the EnKF\cite{Eveng,Geven} that perturbs each of the predicted particles using the measurement to obtain a perturbed ensemble of points that actually samples the posterior density. However, the EnKF always assumes a unimodal Gaussian for its predicted and posterior filter densities. At a more minor level, in PGM, we actually do the mean and covariance update of the components using the Kalman update equation rather than perturbing the ensemble of predicted particles\cite{Burgers}. Using the particle based Kalman Filtering method such as the EnKF, we see that at least the calculation of the mean and covariance is independent of the dimension of the state space, and this is precisely the reason why the EnKF is used regularly for the filtering of PDEs such as those arising in Meteorology and Geophysics \cite{Ander} where even the EKF or UKF are intractable.\\
\begin{figure}[t]
\centering
\includegraphics[width=0.45\textwidth,height=0.2\textheight]{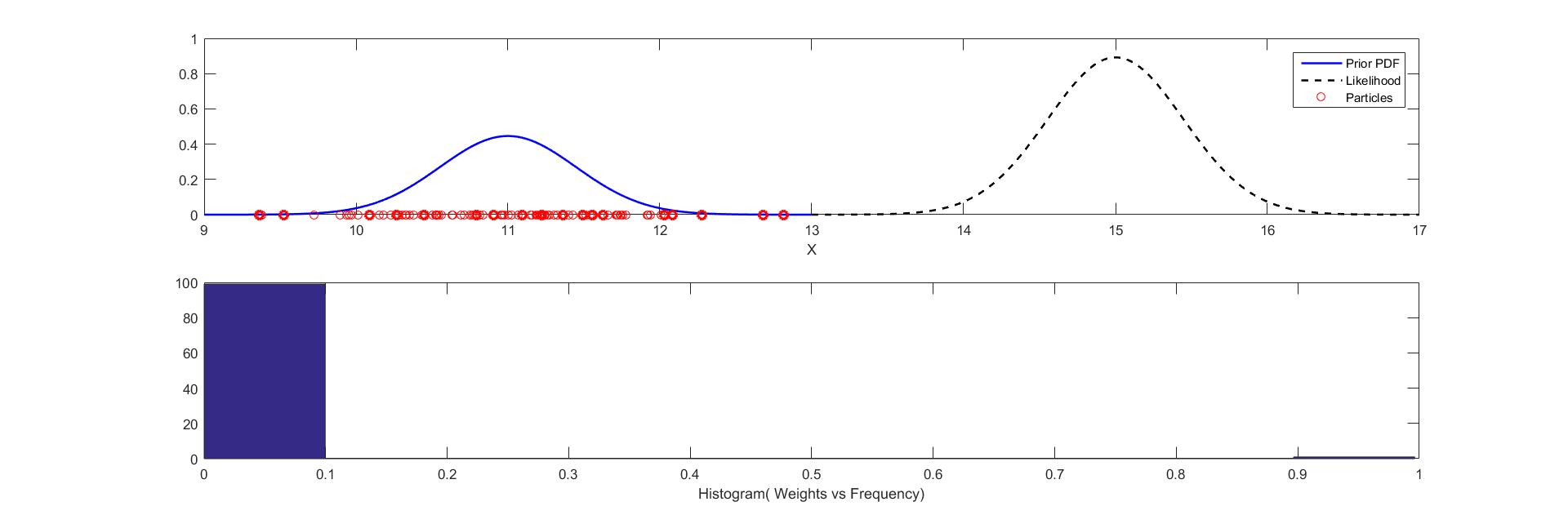}
\caption{Particle Depletion}
\label{fig:partdeplete}
\end{figure}

The earliest GMFs were  a bank of parallel EKFs but the number of modes in the GM was always fixed through both the prediction and the update steps \cite{Sorenson}. This, as has been noted in Section II, can be quite restrictive as it only considers multi-modality arising from initial conditions and never from the prediction and update steps. Other GMFs have more sophisticated methods for updating the weights of the GMM using the Fokker-Planck equation \cite{Terejanu} but keeps the number of modes fixed nonetheless. Recent GMF algorithms have also focused on time varying number of modes and used various heuristics to decide when to split a particular Gaussian into multiple components \cite{Mars}. Most of these methods use typical Kalman filtering propagation methods such as in the EKF/ UKF to propagate as well as split the Gaussian components. In contrast, we use a particle ensemble of the predicted random variable along with a clustering algorithm to conveniently find the number as well as the mean and covariances of the component clusters. In particular, we feel that the PGM harnesses the strength of the PF, the particle prediction step, along with the strength of the Kalman update in GMFs, using clustering algorithms, to develop a  technique that is free from the weaknesses of either technique.
\section{PGM Filter Implementation}
\label{sec:implem}
In this section, we discuss the  computations involved in the actual implementation of the PGM filter in detail. We separate these computations into the following four broad categories.
\begin{enumerate}
\item \textbf{Sampling}:
 Let the posterior pdf at time $n$ be given by 
\begin{equation}
\pi_n(x) = \sum_{i=1}^{M(n)} \omega_i(n) \ga_i(x; \mu_i(n), P_i(n))
\end{equation}
Draw an ensemble $S_{n}$ of $N_{p}$ states $\lbrace x_{n}^1,\cdots,x_{n}^{i},\cdots,x_{n}^{N_{p}} \rbrace$  from the GMM $\pi_{n}(X)$. The set $S_{n}$ represents the uncertainty involved in the state estimate at $n$. Draw $N_{p}$ independent samples of the process noise term $w(n)$ from its density $P_{W}(w)$ to get $Sw_{n}=\lbrace w_{n}^{1},\cdots,w_{n}^{i},\cdots,w_{n}^{N_{p}} \rbrace$. Let
\begin{equation}
x_{n+1}^{i-}=f(x_{n}^{i})+w_{n}^{i}.
\end{equation}
Then the set $S_{n+1}^{-1}=\lbrace x_{n+1}^{1-},\cdots,x_{n+1}^{i-},\cdots,x_{n+1}^{N-} \rbrace$   obtained is distributed according to $\pi _{n+1}^{-}(x)$ and represents the uncertainty in the propagated random variable in discrete form.
\item \textbf{Clustering}:
 Before proceeding to the measurement update step, a new GMM is fit to the set $S_{n+1}^{-}$ using a clustering algorithm of choice. However, if a measurement update is not to be performed, this ensemble $S_{n+1}^{-}$ is set to be $S_{n+1}$ which can be propagated further by drawing a new set of noise terms and substituting in the dynamic model as described above. The objective of clustering is to be able to compute a parametric mixture model describing the distribution of data in $S_{n+1}^{-}$ including the number of mixture components $M^{-}(n+1)$. In the present work, we have used the simple k-means clustering to compute GMM parameters. However, the k-means approach requires the total number of clusters to be specified externally. To work around this limitation, we have implemented a naive strategy which only requires the upper bound $M_{max}^{-}(n+1)$ as the external input instead of $M^{-}(n+1)$. We define the likelihood agreement measure ($L_{mes}$)\cite{Mars} as the measure of fitness of the parametric model $\theta_{a}$ in describing the dataset $S$. Let $\theta_{a,M}$be the M-component mixture model arrived at from $k-means$ clustering. Then the likelihood agreement measure may be computed as
\begin{equation}
L_{mes}(\theta_{a,M})=    \sum_{i=1}^{N_{p}}\pi_{\theta^{a}}(x^{i-}_{n+1})
\end{equation}
where $\pi_{\theta^{a}}(x)$ is the mixture pdf derived from the parametric model $\theta_{a,M}$. Let $\theta_{a^{*},M^{*}}$ be the optimal parametric model with $M^{-}_{n+1}=M^{*}$ components that maximized the likelihood agreement measure given the upper bound $M_{max}^{-}(n+1)$. Then, the naive strategy employed in implementing K-means clustering is presented in the following algorithm.
\begin{algorithm}
\caption{Clustering:Naive Strategy }
\begin{algorithmic}[1]
\REQUIRE $S_{n+1}^{-1}=\lbrace x_{n+1}^{1-},\cdots,x_{n+1}^{i-},\cdots,x_{n+1}^{N_{p}-} \rbrace ,M_{max}^{-}(n+1) $
\ENSURE $\theta_{a^*,M*},\hspace{5pt} M*\leq M_{max}^{-}(n+1)$
\STATE{$M\gets M_{max}^{-}(n+1)$}
\STATE{$\theta_{a^*,M*}\gets \theta_{a,M^{-}_{max}(n+1)}$}
\STATE{$L_{mes}^{*}\gets L_{mes}(\theta_{a,M^{-}_{max}})$}
\WHILE{$M>1$}
\STATE{$M \gets M-1$}
\STATE{Compute $\theta_{a,M}$ using k-means}
\IF{$L_{mes}(\theta_{a,M})\geq L_{mes}^{*}$}
\STATE{$\theta_{a^*,M*}\gets \theta_{a,M}$}
\STATE{$L_{mes}^{*}\gets L_{mes}(\theta_{a,M})$}
\ENDIF
\ENDWHILE
\end{algorithmic}
\end{algorithm}
The most common implementation of the k- means clustering approach is the Lloyd's algorithm\cite{lloyd}. The time complexity of the Lloyd's algorithm is known to be $O(N_{p}Mdi)$ where $N_{p}$ is the number of particles to be clustered ,$d$ the dimensionality of the state space, $M$ the number of clusters and $i$ is the number of iterations\cite{Pell}. In practice, the algorithm stops after several iterations finding a local optimum. Implementing the naive clustering strategy as described here will result in a quadratic time complexity in $M_{max}^{-}(n+1)$.\\
\item \textbf{Measurement Update}:
Let $\pi^{-}_{n+1}$ be the mixture pdf derived from the parametric model $\theta_{a^*,M*}$.
\begin{equation}
    \pi^{-}_{n+1}=\sum \limits_{i=1}^{M^{-}_{n+1}} w_{i}\ga_{i}(X,\mu_{i}^{-}(n+1),P_{i}^{-}(n)).
\end{equation}
Let $z_{n+1}$ be the measurement vector recorded at time $n+1$. The PGMF algorithm approximates the posterior pdf as a GMM with N components. The mixture parameters for the posterior pdf are computed in two stages. In the first stage, the component means and covariances are computed through Kalman measurement update of the individual mixture modes as described in Eqs.\ref{KU1},\ref{KU2}. However, in the present work we have considered two different approaches to computing the covariance terms($P_{i,ZX}(n+1),P_{i,ZZ}(n+1)$) and the expectations ($E_{i}(h(X))$) required for performing the Kalman update. 

\subsection{Update 1(PGM filter 1)}
In this approach, we compute the statistics of the posterior random variable with the unscented transform using a set of of $2d+1$ sigma points that are distributed symmetrically . 
%The sigma points are chosen as,
%\begin{align}
%%\begin{gathered}
%&\chi_{0}=\mu^{-}_{i}(n+1)\\\nonumber
%&\chi_{j}=\mu^{-}_{i}(n+1)+\left(\sqrt{\left(d+\lambda\right)P_{i}^{-}(n)}\right)_{j} \quad j=1,\hdots,d\\\nonumber
%&\chi_{j}=\mu^{-}_{i}(n+1)-\left(\sqrt{\left(d+\lambda\right)P_{i}^{-}(n)}\right)_{j} \quad j=d+1,\hdots,2d\\\nonumber
%\end{align}
%Also defined are the weights 
%\begin{align}
%&W_{m}^{0}=\frac{\lambda}{d+\lambda}\\\nonumber
%&W_{c}^{0}=\frac{\lambda}{d+\lambda}+(1-\alpha^2+\beta)\\\nonumber
%&W_{c}^{j}=W_{m}^{j}=\frac{1}{2(d+\lambda)},\hspace{10pt} j=1,\cdots 2d\nonumber
%%\end{gathered}
%\end{align}
%
%The terms $\alpha,\beta$ and $\lambda$ represent tuning parameters that may be adjusted to suit the specific problem under consideration. 

The covariance terms and the expectations  required for computing the Kalman gain and posterior statistics are then computed as the weighted sample averages from the sigma points. Hence in PGM filter 1, the posterior means and covariances are computed by  performing a UKF measurement update individually on each GMM component.

%The exact steps involved in the computation of these terms are provided below.
%\begin{align}
%&z_{j}=h(\chi_{j}),\hspace{10pt} j=0,\cdots 2d\\
%&\hat{z}_{i}=\sum_{j=0}^{2d}w_{m}^{j}z_{j}\\
%&P_{i,ZZ}=\sum_{j=0}^{2d}w_{c}^{j}(z_{j}-\hat{z})(z_{i}-\hat{z})^{T}+R\\
%&P_{i,ZX}=\sum_{j=0}^{2d}w_{c}^{j}(z_{j}-\hat{z}_{i})(\chi_{j}-\mu^{-}_{i})^{T}\\
%&K_k=P_{i,ZX}^{'}.P_{i,ZZ}^{-1}\\
%&\mu_{i}(n+1)=\mu^{-}_{i}(n+1)+K_k\left( Z_{n+1}-\hat{z}_{i}\right)\\
%&P_{i}(n+1)=P_{i}^{-1}(n+1)-K_k.P_{i,ZZ}.{K_k}^{'}.
%\label{eqn:ge}
%\end{align} 

\subsection{Update 2 (PGM filter2)}
In this approach, the covariances and cross covariances required for computing the gain matrix are evaluated directly from the particles.
Let $S_{j,n+1}^{-1}=\lbrace x_{j,n+1}^{1-},\cdots,x_{j,n+1}^{i-},\cdots,x_{j,n+1}^{Nj-} \rbrace$ denote the set of particles that form the $j-th$ cluster. Then the mean and covariance terms required from the cluster $j$ are computed as
\begin{align}
&z_l=g(x_{j,n+1}^{l}) \quad l=1,\hdots,Nj\\\nonumber
&\hat{z}_{j}=\frac{1}{Nj}\sum_{l=1}^{Nj} z_l\\\nonumber
&P_{j,ZZ}=\frac{1}{N-1}\sum_{l=1}^{Nj} \left(z_l-{\hat{z}_{j}}\right)\left(z_l-{\hat{z}}_{j}\right)^{'} + R\\\nonumber
&P_{j,ZX}=\frac{1}{N-1}\sum_{l=1}^{Nj}\left(z_l-{\hat{z}_{j}}\right) \left(x_{j,n+1}^{l}-{\mu_{j}^{-}(n+1)}\right)^{'}\\\nonumber
\label{eqn:gathe}
\end{align}

The posterior mean and covariance $\mu_{i}(n+1),P_{i}(n+1)$ for the remaining components are evaluated in a similar manner. Once all the component means and covariances are updated, the mixture weights $w_{i}$ can be computed as described below.
\begin{enumerate}
\item Construct the $ith$ mode measurement pdf $P_{Z,i}(z)$ given by
\begin{equation}
P_{Z,i}(z)=\ga_{i}(z,\hat{z}_{i},P_{i,ZZ})
\end{equation}
\item Evaluate the probability $l_{i}(n+1)$ given by
\begin{equation}
l_{i}(n+1)=P_{Z,i}(z_{n+1}),
\end{equation}
where $z_{k+1}$ is the measurement vector recorded at $t=n+1$.
\item The weight of the $ith$ posterior Gaussian mixture component is then given by
equation \ref{MU.D}\end{enumerate}
The weights, means and covariances as computed here characterizes the GMM describing the posterior pdf.\\
\item \textbf{Merging}:
 Depending on the clustering scheme, dynamics and measurement models, one may observe several closely distributed mixture modes in the posterior pdf. The components that are located sufficiently close may be merged to obtain a mixture model with well separated modes. A similar situation may arise when the clustering scheme assigns a complex model to describe the data due to overfitting. To identify the rights modes to be merged, we define the following normalized error metric \cite{Beck} as a measure of similarity between modes $i$ and $j$.
\begin{equation}
    D(i,j)=\frac{\int (\ga_{i}(x,\mu_{i},P_{i})-\ga_{j}(x,\mu_{j},P_{j}))^{2}dx}{\int \ga_{i}(x,\mu_{i},\Sigma_{i})^{2} dx +\int \ga_{j}(x,\mu_{j},\Sigma_{j})^{2}dx}
\end{equation}
Clearly, $D(i,j)=0$ when the components $i,j$ are identical and $D(i,j)=1$ when they are completely dissimilar. By evaluating the Gaussian integrals involved, the expression for $D(i,j)$ can be reduced to
\begin{equation}
    D(i,j)=\frac{|4\pi P_{i}|^{-1/2}+|4\pi P_{j}|^{-1/2}-2\mathcal{N}(\mu_{i},\mu_{j},P_{i}+P_{j})}{|4\pi P_{i}|^{-1/2}+|4\pi P_{j}|^{-1/2}},
\end{equation}
where $|.|$ represents the determinant of the enclosed square matrix.
Mixture modes that are closely spaced, can be merged whenever the value of normalized error metric falls below a predetermined tolerance ($tol$). In the present study, we have chosen this tolerance to be $tol=0.01$. Let $i_{1},\cdots,i_{k}$ be the indices of the mixture modes that are to be merged. Then the mixture parameters of the new Gaussian component obtained after merging is given by
\begin{flalign}
\omega_{i}=&\sum_{l=1}^{k} \omega_{i_{l}}\\
\mu_{i}=&\frac{\sum_{l=1}^{k} \omega_{i_{l}}\mu_{i_{l}}}{\omega_{i}}\\
P_{i}=&\frac{\sum_{l=1}^{k} \omega_{i_{l}}(P_{i_{l}}+(\mu_{i_{l}}-\mu_{i})(\mu_{i_{l}}-\mu_{i})^{T})}{\omega_{i}}
\end{flalign}
Recursive implementation of the prediction, clustering, update and merging algorithms as described here constitutes the PGM filter.
\end{enumerate} 
\section{Numerical Examples}
\label{sec:numex}
In this section, the particle Gaussian mixture filter is applied to three test case problems to evaluate the filtering performance. Other nonlinear filters such as the UKF and PF are also simulated for comparison. For the PF, a sequential importance resampling (SIR) design is considered. The estimation results are assessed for accuracy, consistency and informativeness. An exact description of the metrics used to compare the filter performance in each of the aforementioned categories is provided below.
\begin{enumerate}[label=(\alph*)]
\item Accuracy:  A Monte Carlo averaged root mean squared error ($E_{rms}(t)$) is considered for evaluating the accuracy of the estimates. The value of $E_{rms}(t)$ is computed as
\begin{equation}
E_{rms}(t)=\sqrt{\dfrac{1}{N_{Mo}} \sum_{j=1}^{N_{Mo}}\| \hat{X}^{j}(t)-\hat{\mu}^{j}(t)\|_{2}^{2}}.
\end{equation}

Here, $\hat{X}^{j}(t)$ and $\hat{\mu}^{j}(t)$ represent the actual and estimated states at the time instant $t$ during the $j$th Monte Carlo run. The time averaged error($\overline{E_{rms}}$) can be computed from $E_{rms}(t)$ as 
\begin{equation}
\overline{E_{rms}}=\dfrac{1}{T}\sum_{t=1}^{T}E_{rms}(t).\\
\end{equation}
\\

\item Consistency:
The consistency of the filtered pdf is examined using the normalized estimation error squared (NEES) test\cite{Bail}. For a unimodal state pdf, the NEES test is evaluated using the $\chi^{2}$ test statistic ($\beta_{j,t}$) given by
\begin{equation}
\beta_{j,t} = (\hat{X}^{j}(t) -\hat{\mu}^{j}(t))^{T} (\hat{P}^{j}(t))^{-1}(\hat{X}^{j}(t) -\hat{\mu}^{j}(t)).
\end{equation}

The term $\hat{P}^{j}(t)$ in the above expression represents the covariance of the unimodal filtered pdf at time $t$ during $j$th Monte Carlo run. The Monte Carlo averaged NEES test ($\beta_{t}$) is computed from this expression as 
\begin{equation}
    \beta_{t}=\frac{1}{N_{Mo}}\sum_{j=1}^{N_{Mo}}\beta_{j,t}.
\end{equation}
It can be shown that when the state vector $x \in \Re^d$ is normally distributed, the product $N_{Mo}\beta_{t}$ has a $\chi^{2}$ distribution with $dN_{Mo}$ degrees of freedom. As a result, the consistency of the filtered pdf can be tested by determining whether $\beta_{t}$ falls within probable bounds determined from the corresponding $\chi^{2}$ random variable. When a particle filter is used for estimation, the NEES test statistic is evaluated using the sample mean and sample covariance of the ensemble of states.\\
 The NEES test as presented here is not suitable for evaluating the consistency of a multimodal pdf. Let the filtered pdf at time $t=n$ be given by
 \begin{equation}
     \pi_{n}(x)=\sum_{i=1}^{M_{n}} \omega_{i}\ga_{i}(x;\mu_{i},P_{i}).
 \end{equation}
When the mixture modes are well separated, the total probability that the r.v represented by the GMM belongs to any one of the mixture modes is given by its mixture weight. In defining a GMM describing the state of the dynamical system, the filter hypothesizes the mixture weights, the  component means and their covariances. Hence, it is indispensable that, along with the means and covariances, the consistency test also checks for the agreement of the mixture weights with the observed data. A novel two step procedure for evaluating the consistency of a GMM pdf is developed as part of the present work. Let the random vector $V_{\mathbbm{1}}$ be defined as
\begin{equation}
 V_{\mathbbm{1}}=[\mathbbm{1}_{c}^{1}(x) \cdots \mathbbm{1}_{c}^{i}(x) \cdots \mathbbm{1}_{c}^{M_{n}}(x)]^{T} , 
\end{equation}
where $\mathbbm{1}_{c}^{i}(x)$ represents the indicator function which equals 1 when the state belongs to the $i^{th}$ mixture component and zero otherwise. Then assuming that the mixture modes are well separated, it can be deduced that
\begin{equation}
    E( V_{\mathbbm{1}})=[\omega_{1} \cdots \omega_{i} \cdots \omega_{M_{n}}]^{T}.
\end{equation}
It should be observed that the merging procedure described in the previous section helps to keep the modes well separated as it coalesces the components that are closely spaced. Define
\begin{flalign}
\epsilon_{v}=&V_{\mathbbm{1}}-[\omega_{1} \cdots \omega_{i} \cdots \omega_{M_{n}}]^{T},\\
\epsilon_{v}^{2}=&\epsilon_{v}^{T} \epsilon_{v}.
\end{flalign}
Here the value of $V_{\mathbbm{1}}$ is evaluated over a single instant, i.e, $\sum_{i=1}^{M_{n}} \mathbbm{1}_{c}^{i}(x)=1$.
Then, it can be shown that
\begin{align}
   & E(\epsilon_{v}^{2})=\sum_{i=1}^{M_{n}} \omega_{i}(1-\omega_{i}),\\
   & E(\epsilon_{v}^{2}- E(\epsilon_{v}^{2}))^2=\sum_{i=1}^{M_{n}} \omega_{i}(1-\omega_{i})((1-\omega_{i})^{3}+\omega_{i}^3)\\\nonumber&+\sum_{k}\sum_{\substack{j\\ j \neq k}}\omega_{j}\omega_{k}(\omega_{j}+\omega_{k}-3\omega_{j}\omega_{k})- (E(\epsilon_{v}^{2}))^{2}.
\end{align}

Let $V_{\mathbbm{1}}^{j}(t)$ be the above defined indicator vector computed at time $t$ during $j^{th}$ Monte Carlo run. At each instant, the state vector is assumed to belong to the component which has the highest likelihood given the truth. That is 
\begin{equation}
\mathbbm{1}_{c}^{i}(\hat{X}^{j}(t))=
\begin{cases}
\hfill 1, \hfill &  i=\arg\max \ga_{i}(\hat{X}^{j}(t),\mu_{i},P_{i})\\
\hfill 0, \hfill & \text{otherwise}\\
\end{cases}
\end{equation}
Let the sum $Sw_{t}$ be defined as
\begin{equation}
    Sw_{t}=\sum_{i=1}^{N_{Mo}} \dfrac{\epsilon_{v}^{j}(t)^{2}-E(\epsilon_{v}^{j}(t)^{2})}{\sqrt{N_{Mo} E(\epsilon_{v}^{j}(t)^{2}- E(\epsilon_{v}^{j}(t)^{2}))^2}}
\end{equation}
The expectations involved in this sum are computed using the mixture weights $\omega_{i}^{j}(t)$ at time $t$ during the run $j$. As $N_{Mo}$ becomes large, the sum $Sw_{t}$ converges in distribution to a standard Gaussian random variable, $Sw_{t} \xrightarrow{d} \mathcal{\ga}(x,0,1)$. Hence probability based bounds on the value of  $Sw_{t}$ can be computed from a standard normal distribution. Indeed, the first step in the two step procedure for testing consistency of GMM pdfs involves evaluating $Sw_{t}$ to determine whether it falls within the desired bounds. In the second step, a NEES test statistic is computed from the GMM  except that the mean and covariance of the most likely mode is used to evaluate the $\chi^{2}$ random variable,i.e.,
\begin{flalign}
\beta_{j,t} = &(\hat{X}^{j}(t) -\hat{\mu}_{i}^{j}(t))^{T} (\hat{P}_{i}^{j}(t))^{-1}(\hat{X}^{j}(t) -\hat{\mu}_{i}^{j}(t)),\\ \nonumber i=&\arg\max N(\hat{X}^{j}(t),\mu_{i}^{j}(t),P_{i}^{j}(t))
\end{flalign}
The $\chi^{2}$ test statistic obtained from the above expression may then be averaged over several Monte Carlo runs to perform the NEES test. This completes the two step consistency check for GMM pdfs.\\

\item Informativeness:
The informativeness of estimates furnished by an estimator is an important marker of its performance. A more informative  estimate almost always has a higher utility in comparison to a more uncertain estimate. In practice, several nonlinear filters inflate the covariance of the estimates to ensure consistency sacrificing informativeness in the process\cite{Lef2}. Two separate metrics are considered for evaluating the consistency of estimates in the present work. When the estimation errors are similar, a more informative state pdf can be expected to produce higher likelihoods given the true state.  The averaged likelihood of the truth over $N_{Mo}$ Monte Carlo runs may be computed as 
\begin{equation}
L(t)=\frac{1}{N_{Mo}}\sum_{j=1}^{N_{Mo}}\pi_{t}^{j}(\hat{X}^{j}(t)).
\end{equation}
Here $\pi_{t}^{j}$ represent the state pdf conditioned upon all measurements recorded until the time instant $t$ in the $j$th Monte Carlo run. The time averaged likelihood is computed from the above expression as
\begin{equation}
\hat L=\dfrac{1}{T}\sum_{t=1}^{T}L(t).
\end{equation}
When the state pdf is represented as an ensemble of states, the likelihood is computed using a unimodal Gaussian pdf characterized by the sample mean and covariance of the collection of states. For a unimodal Gaussian pdf, the total volume occupied by the $r$-sigma ellipse contains a fraction $f_{p}$ of the total probability. It can be shown that this fraction depends only upon the value of $r$ and the dimensionality $d$ of the state space, i.e., $f_{p}=f_{p}(r,d)$. Let $f_{p}'$ be the fraction of probability contained in the $r'$-sigma ellipse for a fixed $r'>0$. Then the volume occupied by these $r'$ sigma ellipses may be used as a measure of the uncertainty in the state estimate with a larger volume indicating less informative estimates. For a GMM with  well separated modes, the volume of the state space that contains the fraction $f_{p}'$ of total probability may be computed as the sum of volumes of $n'$ sigma ellipses of individual mixture components. In the present work, the volume of state space $V^{j}\sigma_{2}$ that contains the fraction $f_{p}=f_{p}(2,d)$ of total probability is considered for comparing informativeness. For a well separated GMM pdf, the value of $V\sigma_{2}$  can be computed as
\begin{equation}
    V^{j}\sigma_{2}(t)= \sum_{i=1}^{M_{t}}|2\Sigma_{i}|,
\end{equation}
where $M_{t}$ is the number of modes.
Here $|.|$ represents the determinant of the enclosed square matrix. The expression for the unimodal case can be derived by setting $M_{t}=1$ in the above equation. We compute the Monte Carlo averaged $2-\sigma$ volume as
\begin{equation}
V\sigma_{2}(t)=\sum_{j=1}^{N_{Mo}}V^{j}\sigma_{2}(t).
\end{equation}
We also compute the corresponding time averaged value $\hat{V}\sigma_{2}$ given by
\begin{equation}
\hat{V}\sigma_{2}=\dfrac{1}{T}\sum_{i=1}^{T}V\sigma_{2}(t).
\end{equation} 
\begin{figure}[t]
\centering
\includegraphics[width=0.45\textwidth,height=0.2\textheight]{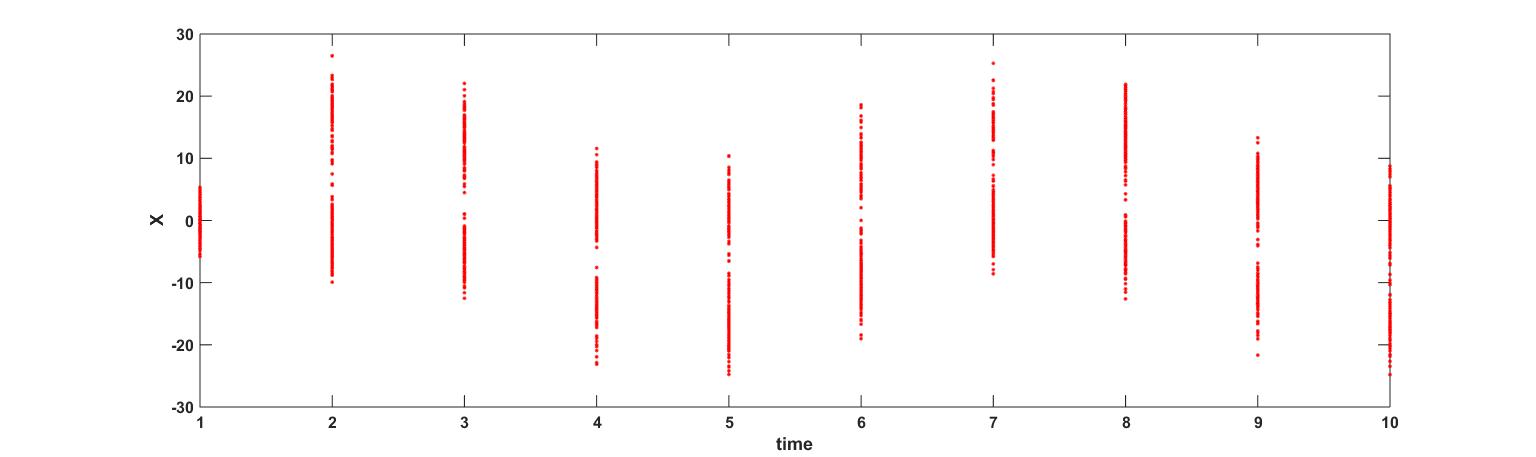}
\caption{Example 1:Propagation of states}
\label{fig:Ex1prop}
\end{figure}
\end{enumerate}
\subsection{Example 1}  Consider the discrete time nonlinear dynamic system given by
\begin{equation}
x_{k+1}=\frac{x_{k}}{2}+ \frac{25x_{k}}{(1+x_{k}^2)}+8\cos(1.2k)+\nu_{k}
\label{eqn:eq1}
\end{equation}
A measurement model aiding the estimation of the system is specified as follows.
\begin{equation}
z_{k}=\frac{x_{k}^2}{20}+n_{k}
\end{equation}

The process noise term $\nu_{k}$ and measurement noise term $n_{k}$ are assumed to be independent zero mean Gaussian random variables with covariances $Q$ and $R$ respectively.
Removing the cosine forcing term and setting $x_{k+1}=x_{k}$ in equation, we get the equilibrium points \\
\begin{align} 
            x_{k}&=\pm 7,0.
\end{align}
However, only the equilibrium  points at $\pm7$ are stable. Location of a set of 200 particles that are sampled and propagated from the pdf $\mathcal{N}(0,5)$ is provided in figure \ref{fig:Ex1prop} for illustrating the propagation of uncertainty in this system. It is worth noting that the measurement model does not disambiguate between $\pm x_{k}$ given a sufficiently uncertain prior pdf.
\overfullrule=2cm

Two variants of the PGM filter, an SIR filter and a UKF are simulated to estimate the test case 1 system for a duration of 52 time steps. The uncertain initial state of the system is assumed to distributed as
$P_0(x)=\mathcal{N}(0,2)$. The process and measurement noise covariances are set to be $Q=10, R=1$. Measurements are recorded at every other instant. The estimation process is repeated over 50 Monte Carlo runs and the time and Monte Carlo averaged performance metrics are computed. The SIR and the PGM filters are implemented with a set 50 particles in this one dimensional test case problem. The upper bound on the number of mixture components $M_{max}$ is set to be 2. The parameter values used in the implementation of the UKF may be found in Table \ref{tab:ukfex1}.  \\
\begin{table}
\centering
\caption{}
\label{tab:ukfex1}
   \begin{tabular}{ |c|c|c| }
     \hline
  \multicolumn{3}{|c|}{UKF parameters} \\
  \hline
  $\alpha$ & $\beta$ & $\lambda$ \\
\hline
  1.3 & 1.5 & 0.2 \\
  \hline
\end{tabular}
\end{table}

Figure \ref{fig:ex1tim} shows the sequence of actual and estimated states obtained from a single Monte Carlo run. The PGM filters and the PF are seen to offer superior tracking performance in comparison to UKF. The Monte Carlo averaged root mean squared error($E_{rms}(t)$) plotted in figure \ref{fig:ex1rmse} provides further support to this observation. As $x_{j,t}$ is a one dimensional random variable, the sum $N_{Mo}\beta_{t}$ is a $\chi^{2}$ random variable in $N_{Mo}$ dimensions. For 50 monte carlo runs, 99 per cent probability upper bound of the random variable $\frac{N_{Mo}\beta_{t}}{N_{Mo}}$ is found to be $Ub_{0.99}=1.5231$. For the multimodal filters, value of $\beta_{t}$ is computed from the covariance of the most likely mode as mentioned before. The Monte Carlo averaged NEES results along with the $Ub_{0.99}$ are plotted in figure \ref{fig:ex1nees}. It is observed that the UKF and PF frequently oversteps the upper bound which marks inconsistent estimates. Furthermore, $\beta_{t}$ computed using the PF estimates are found to frequently exhibit peaks several orders of magnitude larger than the $99\%$ upper bound, indicating covariance collapse. To study the informativeness of the  estimates, the averaged likelihood $L(t)$ is computed and plotted in figure \ref{fig:ex1likelihood}. The estimates provided by the PGM filters are seen to hold the highest likelihoods during most of the simulation time. A similar trend is observed when the volume $V_{2\sigma}$ is plotted against time in figure \ref{fig:ex1volume} where the PGM filters are seen to have the lowest $V_{2\sigma}$ during a large fraction of the simulated time. 
 \begin{figure}[t]
 \centering
 \includegraphics[width=0.45\textwidth,height=0.25\textheight]{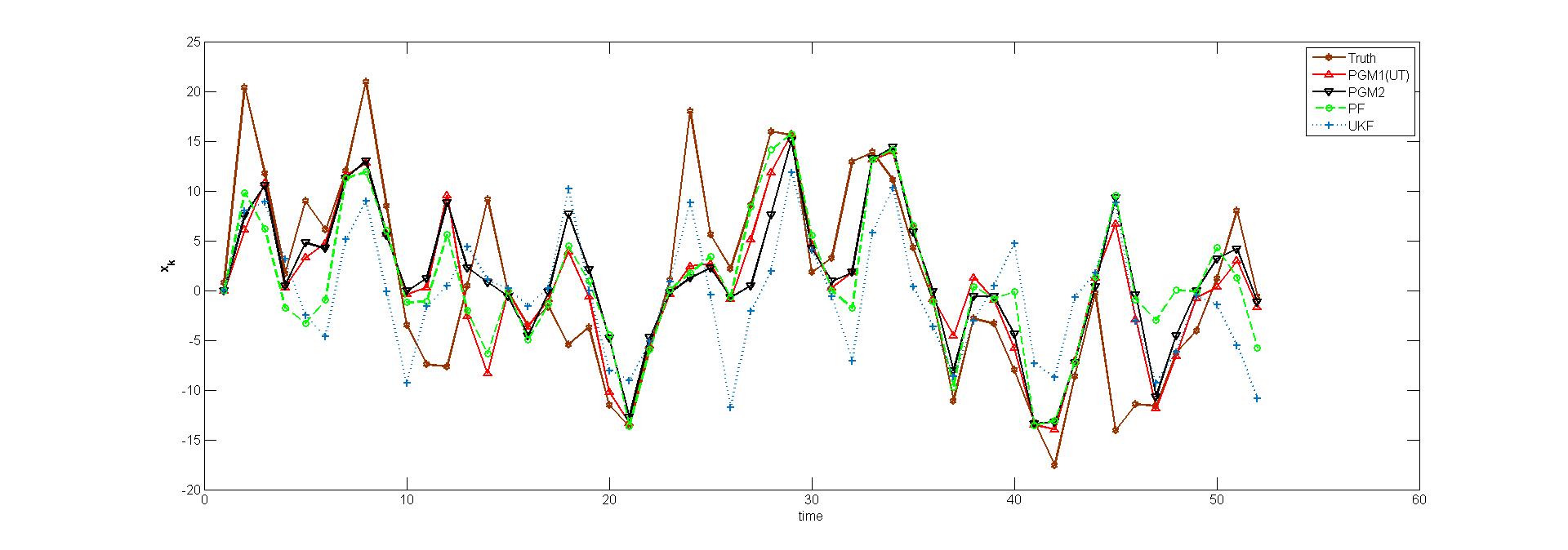}
 \caption{Time history of actual and estimated states (case 1)}
 \label{fig:ex1tim}
 \end{figure}

 \begin{table}[h]  
\centering
\caption{}
\label{tab:tab2}
   \begin{tabular}{ |c|c|c|c|c| }
     \hline
  \multicolumn{5}{|c|}{Example 1: Results} \\
  \hline
 & $\overline{E_{rms}}$ & $\beta_{t,c}(\%)$ & $\hat L$ & $\hat{V}\sigma_{2}$ \\ 
\hline       
  PGM 1(UT) & 6.3169 & 80.77 & 0.1153 & 63.4740\\
  \hline
  PGM 2  & 6.4223& 78.85 & 0.1167 & 61.8611\\
  \hline
  PF & 6.4580 & 42.31 & 0.1072 & 77.1697 \\
  \hline
  UKF  & 8.1980 & 28.85 & 0.0506 &101.034 \\
  \hline
\end{tabular}
\end{table}

To gain further insight, the time averaged values of  RMSE $\overline{E_{rms}}$, likelihood $\hat{L}$, and the $2 \sigma$ volume for each filter are listed in table \ref{tab:tab2}. Also included is the fraction ($\beta_{c\%}$) of the time instants during which the computed averaged NEES result stayed within the 99 percent consistency limits,i.e.,
\begin{equation}
    \beta_{c\%}=\frac{\sum_{t=1}^{T}\mathbbm{1}_{Ub0.99}(\beta_{t})}{T}
\end{equation}
where $\mathbbm{1}_{Ub0.99}(\beta_{t})$ is the indicator function which equals 1 when $\beta_{t}<Ub_{0.99}$ and zero otherwise. The consistency fractions for the PGM filters are seen to be almost double that of the PF. For the PGM filters, the component weights are also tested for consistency. For the PGM1 filter, the value of $Sw_{tZ}$ is found to stay within the 99 percent bounds during $80.38 \%$ of the simulated time. For PGM filter 2, this number was found to be $73\%$.  The time averaged RMSE for the PGM filters are seen to be the lowest and is closely followed by the PF. The PGM filters registered the highest averaged likelihoods and the lowest $2-\sigma$ volumes.
The results presented in table \ref{tab:tab2} clearly show that the PGM filter implementations offer the most accurate, consistent and informative estimates among the all the filters that were tested. 
\begin{figure}[h!]    
\begin{subfigure}[t]{0.45\textwidth}
\includegraphics[width=\linewidth,height=4cm]{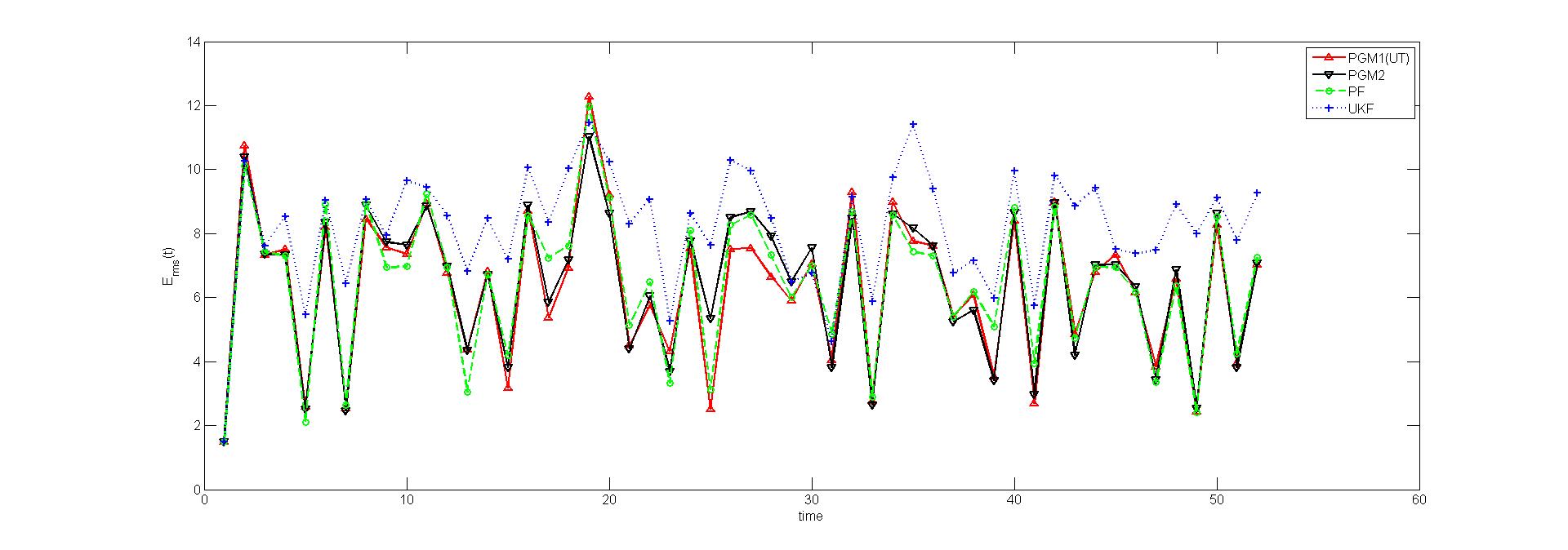}
\caption{$E_{rms}(t)$}
\label{fig:ex1rmse}
\end{subfigure}
\hspace{\fill}
\begin{subfigure}[t]{0.45\textwidth}
\includegraphics[width=\linewidth,height=4cm,]{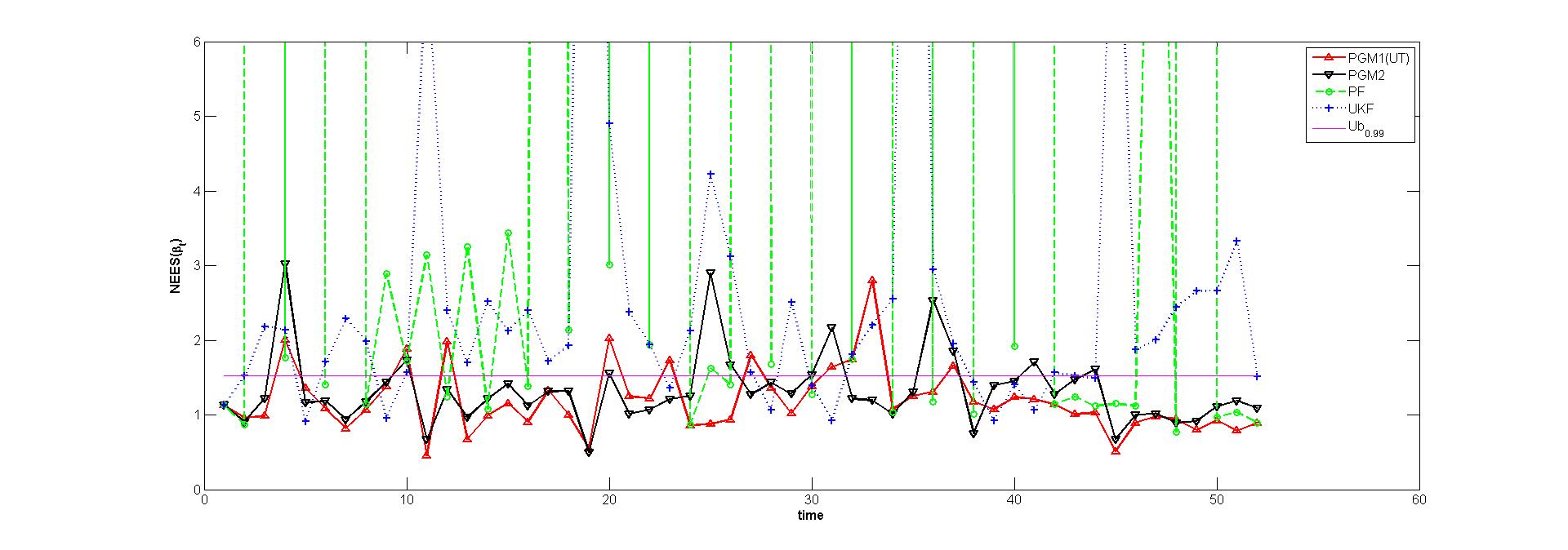}
\caption{$\beta_{t}$}
\label{fig:ex1nees}
\end{subfigure}
\begin{subfigure}[t]{0.45\textwidth}
\includegraphics[width=\linewidth,height=4cm,]{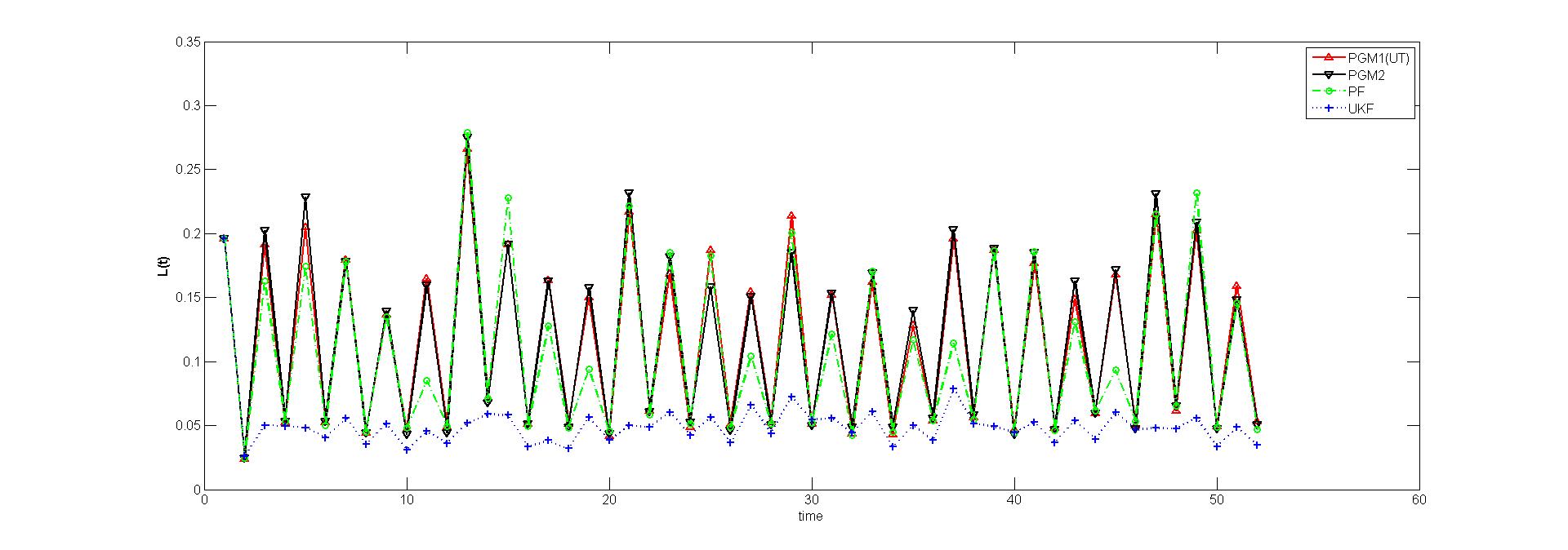}
\caption{$L(t)$}
\label{fig:ex1likelihood}
\end{subfigure}
\hspace{\fill}
\begin{subfigure}[t]{0.45\textwidth}
\includegraphics[width=\linewidth,height=4cm,]{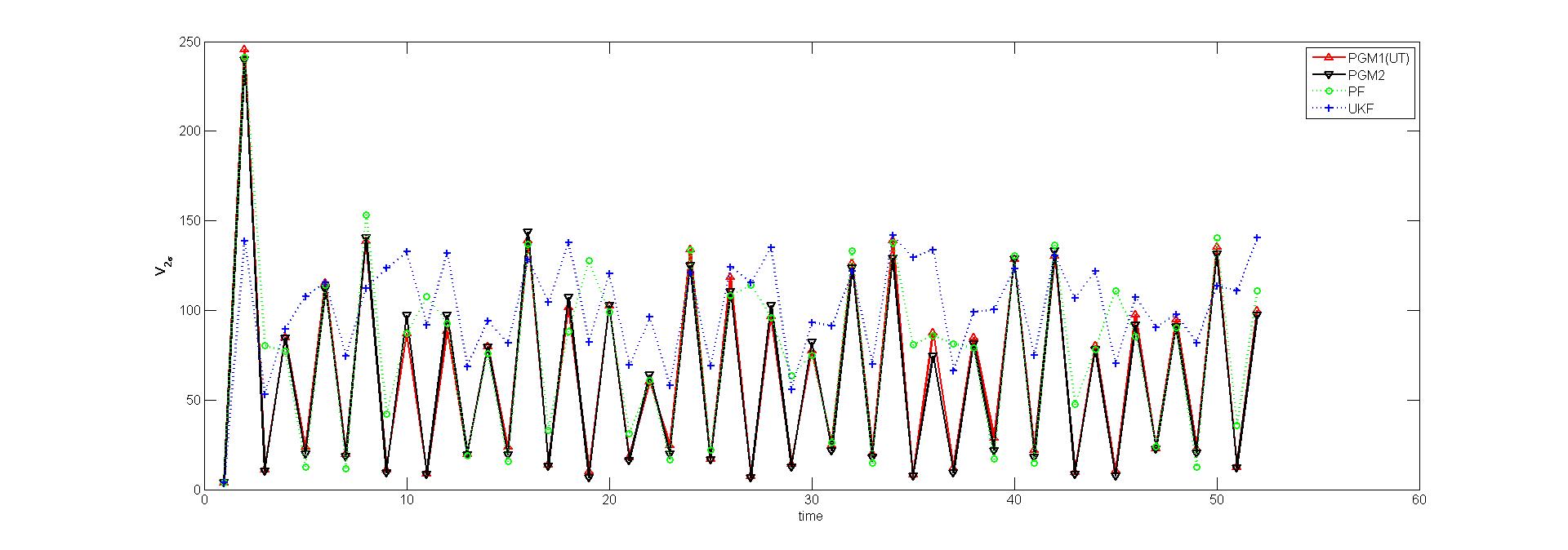}
\caption{$V_{2\sigma}$}
\label{fig:ex1volume}
\end{subfigure}
\caption{Example 1 Results}
\end{figure}

\subsection{Example 2}

In this example, the PGM filters are employed in the estimation of a 3 dimensional Lorenz 63 model for atmospheric convection. The noise perturbed dynamics of the Lorenz 63 system is described the the following set of equations,
\begin{flalign}
\dot{x_{1}}=&\alpha (-x_{1}+x_{2}),\\ \nonumber
\dot{x_{2}}=&\beta x_{1}-x_{2}-x_{1}x_{3},\\ \nonumber
\dot{x_{3}}=&-\gamma x_{3}+x_{1}x_{2}+\Gamma(t),\\\nonumber
\alpha=&10,\beta=28,\gamma=8/3 \nonumber.
\end{flalign}
A scalar nonlinear measurement model($z_{k}$) is considered which is given by
\begin{equation}
    z_{k}=\sqrt{x_{1}(t)^{2}+x_{2}(t)^{2}+x_{3}(t)^{2}}+\nu_{k}.
\end{equation}
The process and measurement noise covariances are both set be equal to 1.The initial state of the system is assumed to be uncertain and is characterized by the bimodal pdf
\begin{align}
    p_{0}(x)=&0.9\ga(x,[-0.2,-0.2,8]^{T},\sqrt{0.35}I_{3\times3})+\\\nonumber
    &0.1\ga(x,[0.2,0.2,8]^{T},\sqrt{0.35}I_{3\times3})
\end{align}
The state of the system is updated at a fixed time step $\Delta t=0.01 s$. The measurements are recorded at the interval of ten time steps. The Lorenz 63 system is known to exhibit chaotic solutions for the parameter values considered in the present simulation. Figure \ref{fig:ex2traj} shows the trajectories followed by a set of 500 particles that are sampled from the initial pdf and propagated through the dynamics of the system. The sampled trajectories are seen to split into two sets that occupy different regions of the state space.
\begin{figure}[t]
\centering
\includegraphics[width=0.45\textwidth,height=0.2\textheight]{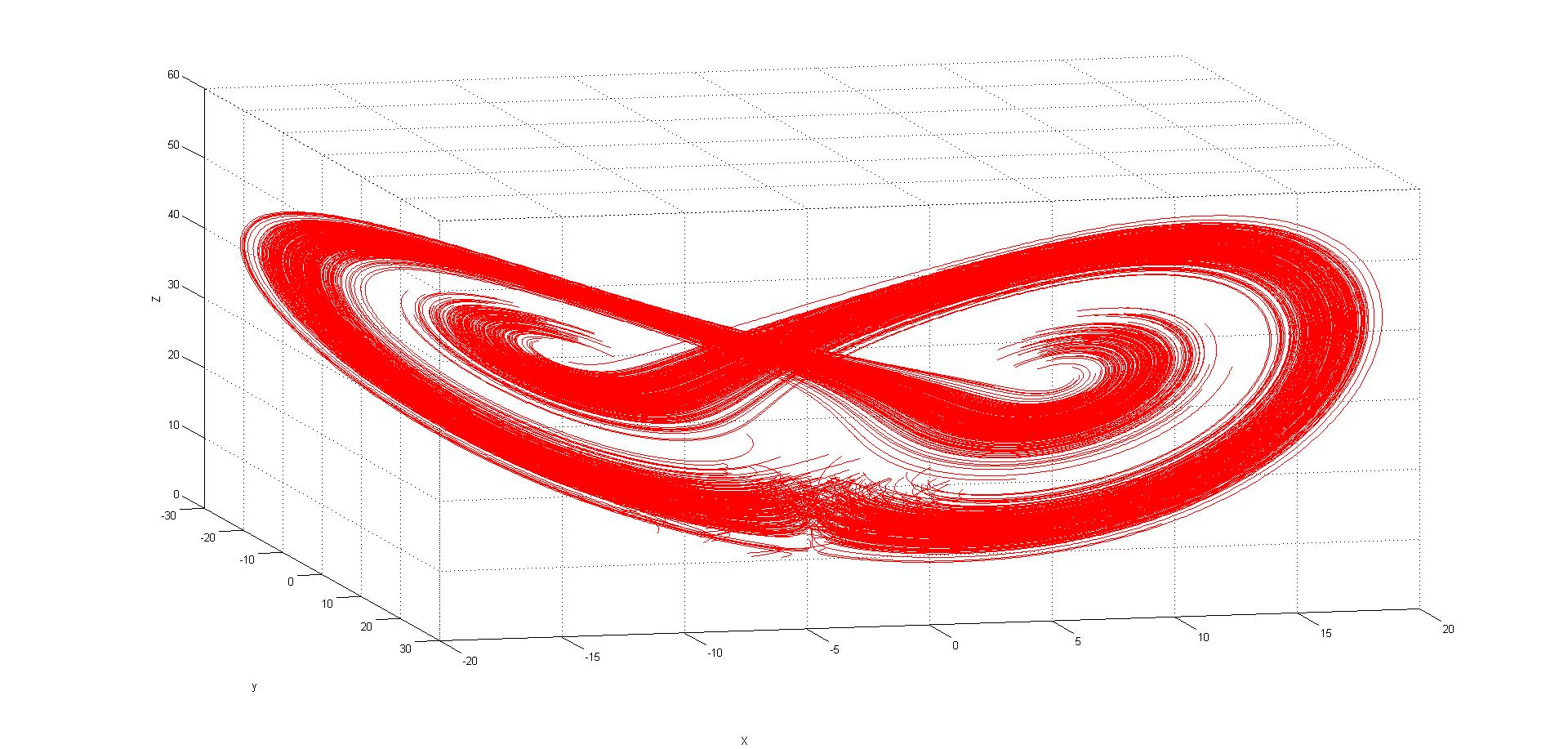}
\caption{Example 2:Propagation of states}
\label{fig:ex2traj}
\end{figure}
 \\

The two variants of the PGM filter, the PF and a conventional Gaussian mixture UKF are employed in the estimation of the Lorenz63 system. The PGM filters and the SIR filter are implemented with 300 particles. The maximum number of mixture components ($M_{max}$) is set be 2. The UKF is implemented using the parameters listed in table \ref{tab:ukfex1}. The dynamics and the estimation process are repeated over 50 Monte Carlo runs. The Monte Carlo averaged mean squared error of estimation $E_{rms}(t)$ for each filter is plotted in figure \ref{fig:ex2rmse}. The computed values of $E_{rms}(t)$ for PGM1 and PGM2 are found to be the smallest among the four filters that are tested. For the Lorenz63 system, when $N_{Mo}=50$, the $99\%$ probability upper bound on the random variable $\beta_{t}$ is found to be $Ub_{0.99}=3.8642$. For the PGM filters and the mixture UKF, the value of $\beta_{t}$ is computed using the covariance of the mixture component with the highest likelihood. The Monte Carlo averaged NEES results are plotted in figure \ref{fig:ex2nees}. The line $y=Ub_{0.99}$ has  also been included for reference. It is observed that the NEES test statistic $\beta_{t}$ for the PF and the mixture UKF overstep the $y=Ub_{0.99}$ line early in the simulation. It is also seen that, once this upper bound is crossed, the value of $\beta_{t}$ for these two filters remain outside the $99\%$ probability limits for the entire remaining duration of the test.
\begin{figure}[h!]    
\begin{subfigure}[t]{0.45\textwidth}
\includegraphics[width=\linewidth,height=4cm]{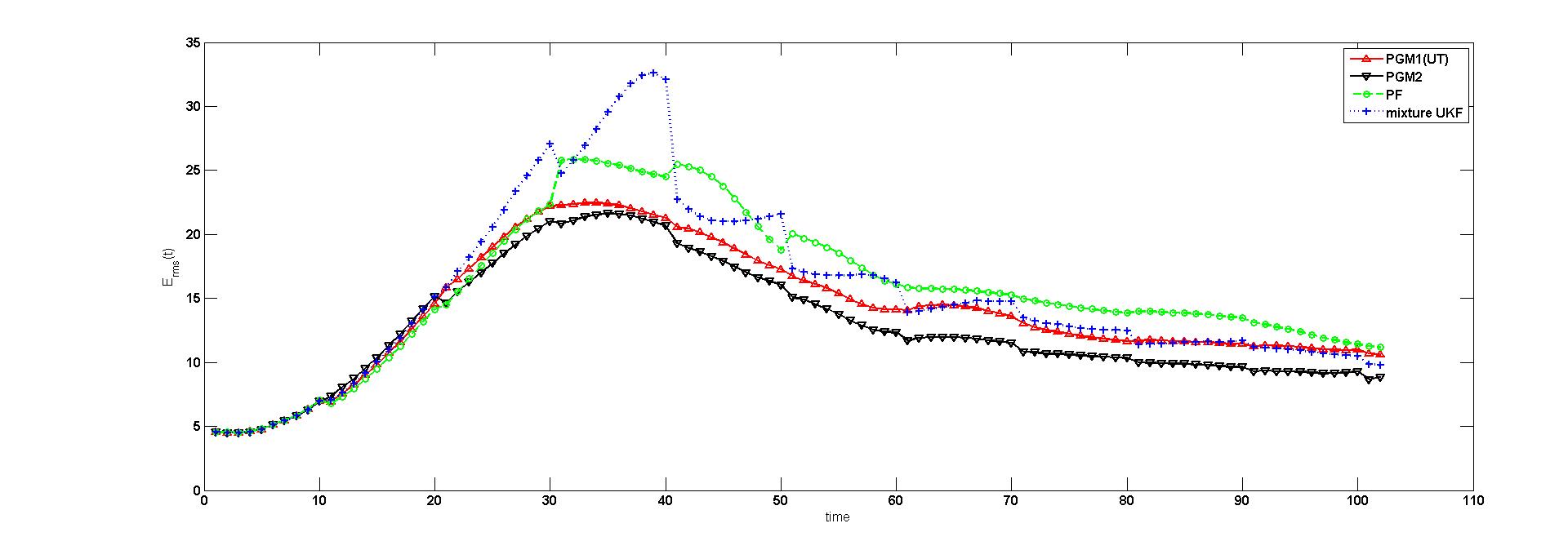}
\caption{$E_{rms}(t)$}
\label{fig:ex2rmse}
\end{subfigure}
\hspace{\fill}
\begin{subfigure}[t]{0.45\textwidth}
\includegraphics[width=\linewidth,height=4cm,]{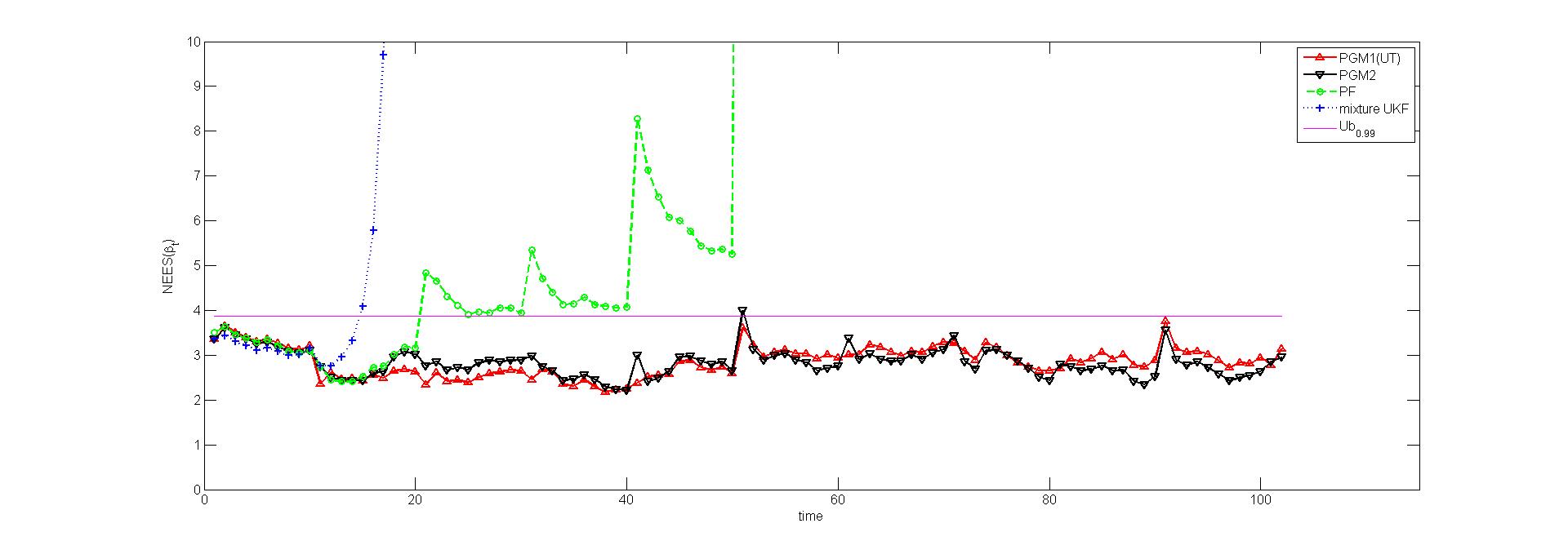}
\caption{NEES($\beta_{t}$)}
\label{fig:ex2nees}
\end{subfigure}

\begin{subfigure}[t]{0.45\textwidth}
\includegraphics[width=\linewidth,height=4cm,]{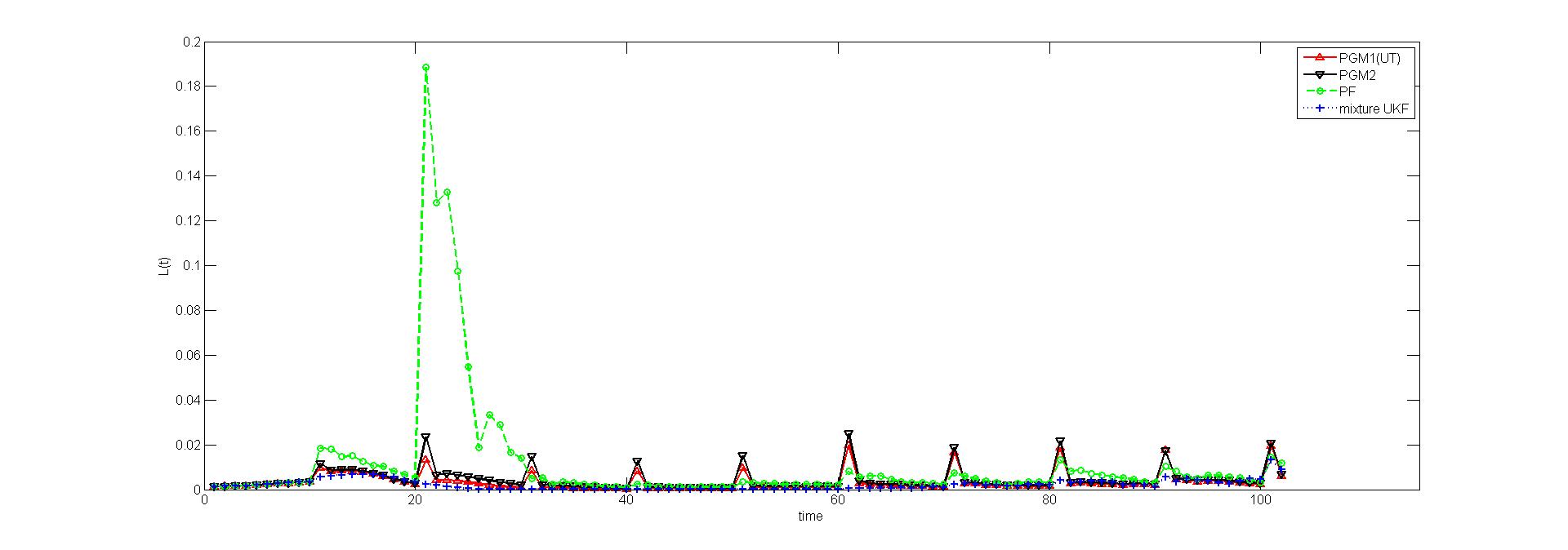}
\caption{$L(t)$}
\label{fig:ex2likelihood}
\end{subfigure}
\hspace{\fill}
\begin{subfigure}[t]{0.45\textwidth}
\includegraphics[width=\linewidth,height=4cm,]{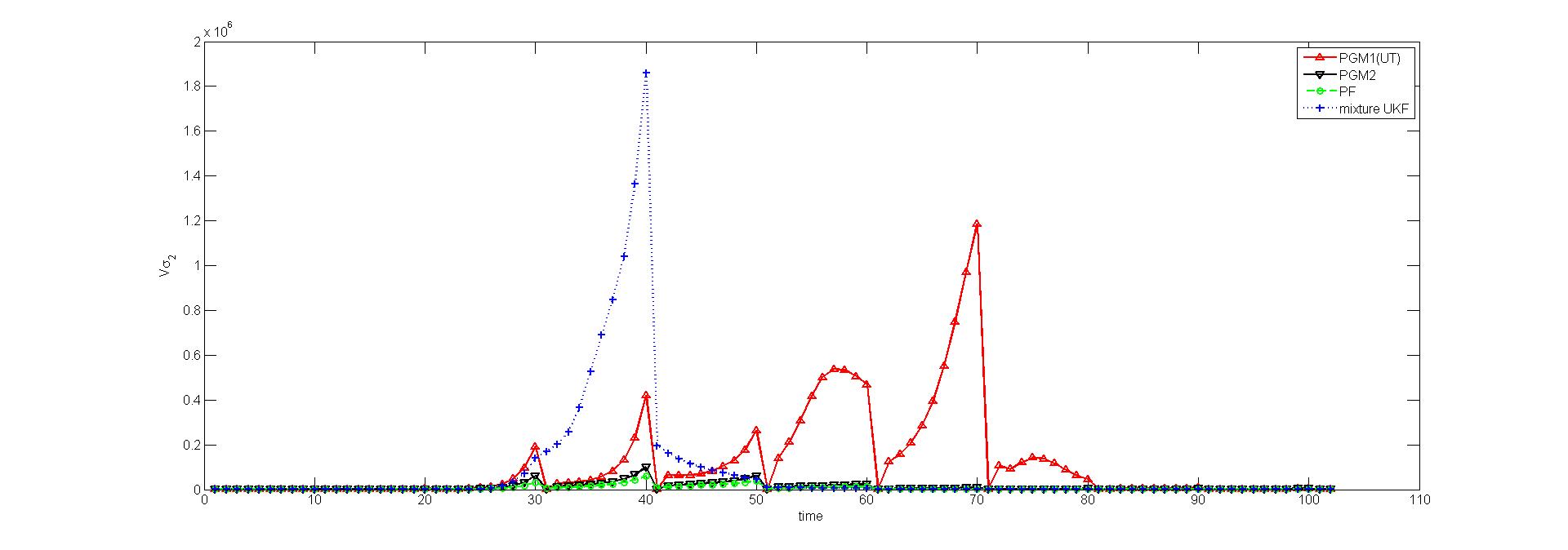}
\caption{$V_{2\sigma}$}
\label{fig:ex2volume}
\end{subfigure}
\caption{Example 2 Results} 
\end{figure}
 In contrast, he averaged NEES test statistic for the PGM filters are seen to stay within the $99\%$ during most of the simulation time. In order to compare the informativeness of the estimates, the averaged likelihoods($L(t)$) and $V\sigma_{2}$ volumes are computed and plotted in figures \ref{fig:ex2likelihood}, \ref{fig:ex2volume} respectively. Curiously, with the highest average likelihoods and the smallest $V\sigma_{2}$ volume, the PF is seen to outperform the PGM filters in furnishing the most informative estimates. However, as the NEES results of the PF
are seen to stay above $10^{2}$ during $40\%$ the simulation time, the higher likelihoods and the small $V\sigma_{2}$ of the PF should be understood as a consequence of its covariance collapse. The consistency fractions($\beta_{c\%}$) and the time averaged values of  RMSE $\overline{E_{rms}}$, likelihood $\hat{L}$, and $2 \sigma$ volume for each filter are listed in table \ref{tab:tab3}

\begin{table}[h]  
\centering
\caption{}
\label{tab:tab3}
   \begin{tabular}{ |c|c|c|c|c| }
     \hline
  \multicolumn{5}{|c|}{Example 2: Results} \\
  \hline
 & $\overline{E_{rms}}$ & $\beta_{t,c}(\%)$ & $\hat L$ & $\hat{V}\sigma_{2}(\times 10^{5})$ \\ 
\hline  
  PGM 1(UT) &   13.9812 & 100 & 0.0035  & 1.1508\\
  \hline
  PGM 2 & 12.7406& 99.02  & 0.0044 & 0.1051\\
  \hline
  PF &   15.5565 & 19.61 & 0.0114  &  0.0667  \\
  \hline
  GMUKF  & 15.3695& 13.73 &  0.0021 & 0.8564\\
  \hline
\end{tabular}
\end{table}
For the three multimodal filters PGM1, PGM2 and mixture UKF, the computed value of $Sw_{tz}$ is found to stay within the $99\%$ bounds during 81.82, 90.91 and 10 percent of the times considered.  The results clearly indicate that the PGM filters are more accurate and consistent than the PF and mixture UKF. The PGM filter estimates are also seen to be more informative than the mixture UKF estimates.

\subsection{Example 3}
In this test case, The PGM filters are employed in the estimation of a Lorenz96 system. The noise perturbed dynamics of the Lorenz96 system is given by
\begin{equation}
    \dot{x_{i}}=x_{i-1}(x_{i+1}-x_{i-2})-x_{i}+F+\Gamma(t),
\end{equation}
where $i=1,2,\cdots,40$. The term F represents a constant external forcing. In the present work, we set $F=8$. The covariance of the zero mean Gaussian white noise is assumed to be $Q=10^{-2}$. A linear measurement model is employed in the estimation of the Lorenz96 system and it is defined as,
\begin{flalign}
    z_{k}=&H X_{k}+\nu_{k},\hspace{8pt} H \in \mathbbm{R}_{20 \times 40}\\
    H_{i,j}=&\begin{cases} \nonumber
\hfill 1, \hfill &  j=2i-1\\
\hfill 0, \hfill & \text{otherwise}.\\
\end{cases}
\end{flalign}
Therefore, the measurement model records the components of the state vector that have odd indices. The measurement noise is assumed to be a zero mean Gaussian random vector with a covariance $R=10^{-2}I_{20 \times 20}$ where $I_{i,j}=\delta_{i,j}$. The initial state of the system is characterized by the pdf
\begin{flalign}
p_{0}(x)=&\ga(x,\mu_{0},P_{0}),
\end{flalign}
where $\mu_{0}=F\begin{bmatrix}1\cdots1\cdots1\end{bmatrix}^{T},\mu_{0} \in \mathbbm{R}_{40 \times 1}$ and $P_{0}=10^{-3}I_{40\times40}$. The time step for updating the state of the system is fixed at $\Delta t=0.05 s$. Measurements are recorded at the interval of 1s.  When a UKF is employed to estimate the state of the Lorenz96 system, the covariance of the estimate is frequently observed to become non positive definite. Hence, the UKF is not included in this comparison study. Instead, an EnKF is used to estimate the Lorenz96 system along with the SIR and PGM filters. All four filters are equipped with a set of 2000 particles. The value of $M_{max}$ is kept at 2. The filters are used to estimate the state of the system for a duration of 10s. The process is repeated over 50 Monte Carlo runs and the averaged performance metrics are computed. The accuracy of the estimates is compared using the Monte Carlo averaged RMSE ($E_{rms}(t)$)which is plotted in figure \ref{fig:ex3rmse}. From the $E_{rms}(t)$ plots, it can be observed that the tracking performance of the PGM filters and the EnKF are comparable. In comparison to the PGM filters, the PF is found to offer inferior tracking performance.  In order to compare the consistency of the filters, the value of the Monte Carlo averaged NEES test statistic $\beta_{t}$ is computed and plotted in figure\ref{fig:ex3nees}.
\begin{figure}[h!]    
\begin{subfigure}[t]{0.45\textwidth}
\includegraphics[width=\linewidth,height=4cm]{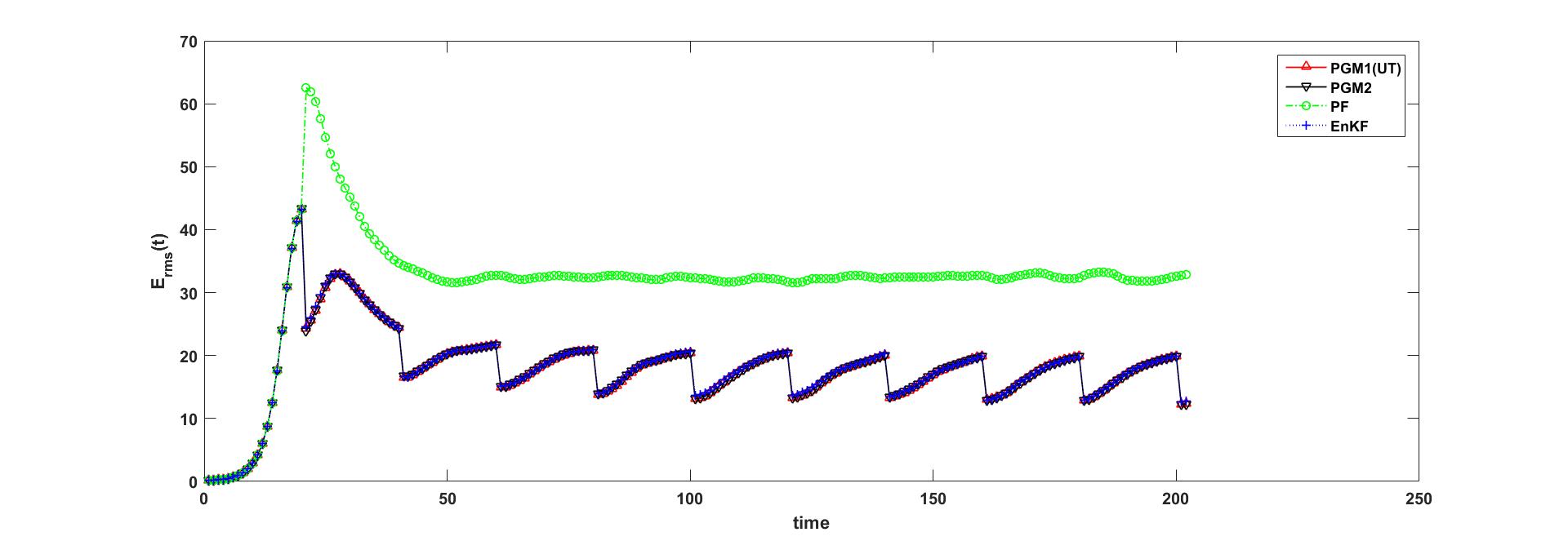}
\caption{$E_{rms}(t)$}
\label{fig:ex3rmse}
\end{subfigure}
\hspace{\fill}
\begin{subfigure}[t]{0.45\textwidth}
\includegraphics[width=\linewidth,height=4cm,]{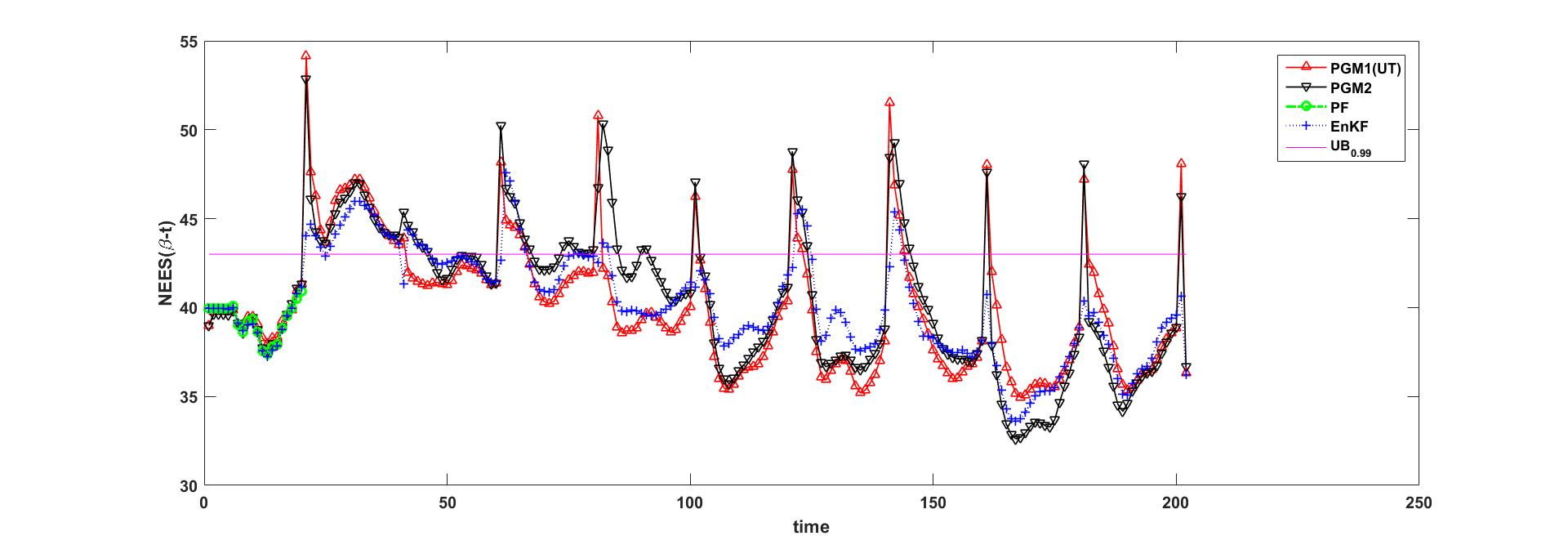}
\caption{NEES($\beta_{t}$)}
\label{fig:ex3nees}
\end{subfigure}

\begin{subfigure}[t]{0.45\textwidth}
\includegraphics[width=\linewidth,height=4cm,]{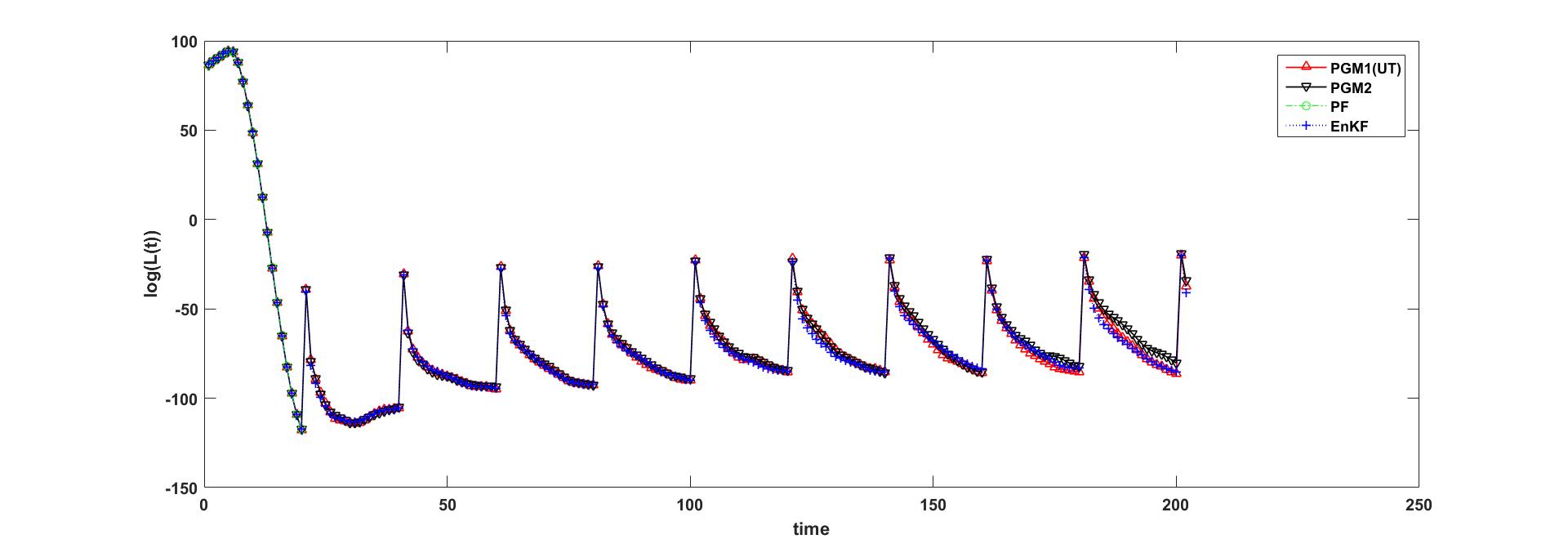}
\caption{$log(L(t))$}
\label{fig:ex3likelihood}
\end{subfigure}
\hspace{\fill}
\begin{subfigure}[t]{0.45\textwidth}
\includegraphics[width=\linewidth,height=4cm,]{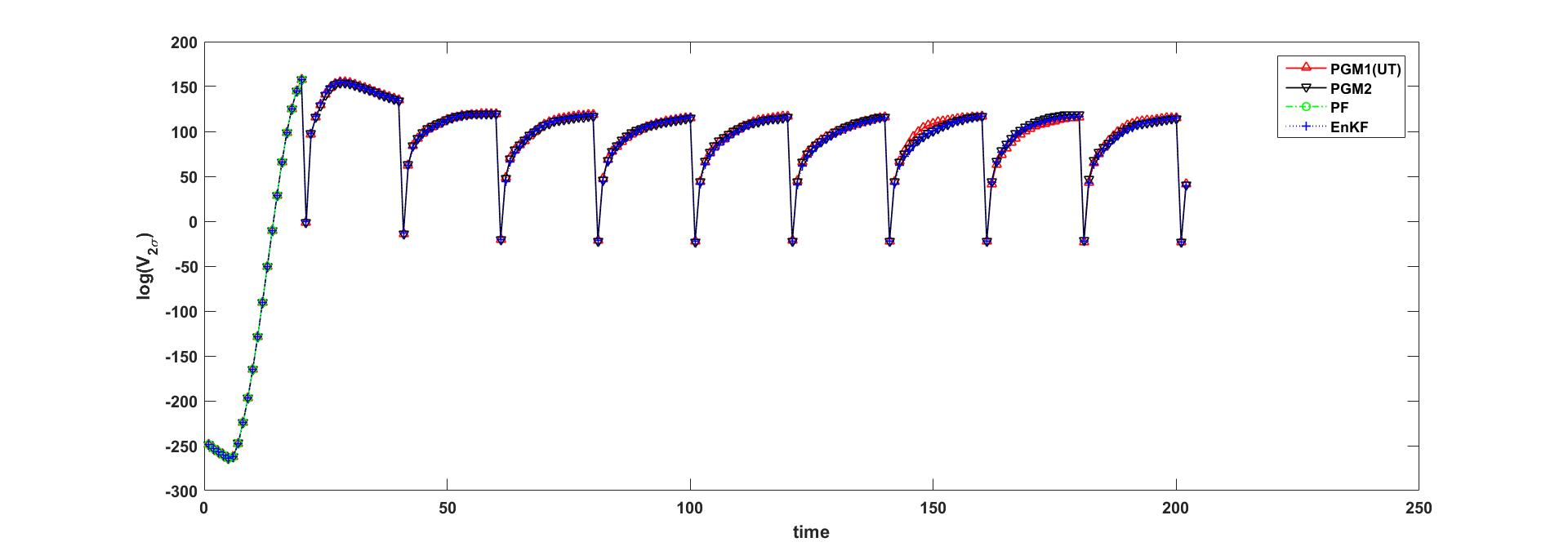}
\caption{$log(V_{2\sigma})$}
\label{fig:ex3volume}
\end{subfigure}
\caption{Example 3 Results} 
\end{figure}
The $99\%$ probability upper bound on $\beta_{t}$ is found to be $Ub_{0.99}=43.0013$. The $y=Ub_{0.99}$ line is included in the NEES plot for reference. The NEES plots show that the EnKF and the PGM1 filter offers comparable performance.  The NEES test results for the PF is seen to blow up after the first measurement update. This clearly indicates that the number of particles employed in the present study is inadequate for a PF based estimation without risking particle depletion.  Finally the logarithm of Monte Carlo averaged likelihoods and the $V\sigma_{2}$ volumes plotted in figures \ref{fig:ex3likelihood},\ref{fig:ex3volume} are used to compare the informativeness of the estimates. In comparison to the EnKF, the PGM filters are seen to perform better in terms of the $v_{2\sigma}$ volume, whereas the EnkF estimates are seen to have the highest averaged likelihoods. The time averaged values of the RMS error $\overline{E_{rms}}$, likelihood ($\hat{L}$) and the $\hat{V}\sigma_{2}$ are listed in table along with the consistency fractions. The results show that the PGM2 is the most accurate estimator of the four filters studied. The EnKF is seen to offer the best performance in terms of consistency fractions , closely followed by the PGM filters. The value of $Sw_{tz}$ was found to stay within the $99\%$ bounds during $60\%$ of the time for both PGM1 and PGM2 filter.

\begin{table}[h!]  
\centering
\caption{}
\label{tab:tab3}
\scalebox{0.8}{
   \begin{tabular}{ |c|c|c|c|c| }
     \hline
  \multicolumn{5}{|c|}{Example 3: Results} \\
  \hline
 & $\overline{E_{rms}}$ & $\beta_{t,c}(\%)$ & $\hat {log L }$ & $ \hat{log V\sigma_{2}}$ \\ 
\hline
  PGM 1(UT) & 18.0069& 80.69  &  89.6553  & 152.8588\\
  \hline
  PGM 2 &   18.0452 & 70.30 &  89.6227  & 152.7732\\
  \hline
  PF &   31.7261 & 9.90 &   89.6244  &  152.7548\\
  \hline
  EnKF  & 18.1055& 81.19  &  89.8193& 152.8034\\
  \hline
\end{tabular}
}
\end{table}

\section{Conclusions}
A novel Gaussian mixture-particle PGM algorithm for nonlinear filtering has been presented. During the prediction step, the PGM filter uses an ensemble of particles to propagate the prior uncertainty. The propagated ensemble is clustered to recover a GMM representation of the propagated pdf. Measurements are incorporated through a Kalman update of the mixture modes to arrive at the posterior pdf. The PGM approach allows the number and weight of mixture components to be adapted during propagation unlike the conventional mixture filters. Additionally, the PGM is not prone to the curse of dimensionality associated with particle measurement updates. The PGM filter density is shown to converge weakly to the true filter density under the condition of exponential forgetting of initial conditions by the true filter. The PGM filter is employed in three test cases to evaluate the etimation performance. It is demonstrated that the PGM filter offers superior estimation performance in comparison to UKF, PF and a mixture UKF. The PGM filter is demonstrated to be capable of tracking the 40 dimensional lorenz 96 system wherein the PF suffers particle depletion. The design of a PGM filtering scheme that incorporates splitting of mixture modes during the measurement update is a goal of future work. Further, the effect of the clustering scheme on the PGM filter needs to be rigorously evaluated.

\bibliographystyle{IEEEtran}
\bibliography{references}

%%%%%%%%%%%%%%%%%%%%%%%%%%%%%%%%%%%%%%%%%%%%%%%%%%%%%%%%%%%%%%%%%%%%%%%%%%%%%%%%

\end{document}